      \theoremstyle{plain}
      \newtheorem{theorem}{Theorem}[section]
      \newtheorem{lemma}[theorem]{Lemma}
      \theoremstyle{definition}
      \newtheorem{definition}[theorem]{Definition}
      \theoremstyle{remark}
      \theoremstyle{example}
      \newtheorem{example}[theorem]{Example}
         \theoremstyle{problem}
         \theoremstyle{proposition}
      \newtheorem{proposition}[theorem]{Proposition}
     \theoremstyle{hypothesis}
      \def\@setcopyright{}
      \def\serieslogo@{}
\begin{document}

\title[One-set updates of closure systems]{Direct and Binary Direct Bases for One-set Updates of a Closure System}

\author {Kira Adaricheva}
\address{Department of Mathematics, Hofstra University, Hempstead, NY 11549, USA}
\email{kira.adaricheva@hofstra.edu}

\author {Taylor Ninesling}
\address{Hofstra University, Hempstead, NY 11549}
\email{tninesling1@pride.hofstra.edu}

\thanks{} 
\keywords{Closure system, Horn-to-Horn belief revision, Singleton Horn Extension Problem, direct basis, canonical direct basis, the $D$-basis, ordered direct basis}
\subjclass[2010]{}

\date{\today}


\maketitle

\begin{abstract}
We introduce a concept of a \emph{binary-direct} implicational basis and show that the shortest binary-direct basis exists and it is known as the $D$-basis introduced in Adaricheva, Nation, Rand \cite{ANR11}. Using this concept we approach the algorithmic solution to the Singleton Horn Extension problem, as well as the one set removal problem, when the closure system is given by the canonical direct or binary-direct basis. In this problem, a new closed set is added to or removed from the closure system forcing the re-write of a given basis. Our goal is to obtain the same type of implicational basis for the new closure system as was given for original closure system and to make the basis update an optimal process.
\end{abstract}

\section{Introduction}
The dynamic update of evolving knowledge bases and ontologies is a routine procedure in the realm of Artificial Intelligence. These applications require tractable representations, such as Horn logic or various versions of descriptive logic. The interest in Horn logic is easily explained by the fact that the reasoning in Horn logic is effective, while the reasoning in general propositional logic is intractable.

If some knowledge base is represented by a (definite) Horn formula $\Sigma$ in variables $X=\{x_1,\dots, x_n\}$, then the set of its models $\mathcal{F}_\Sigma$ forms a lower subsemilattice in $2^X$, which is often referred to as a \emph{closure system} on $X$, or a \emph{Moore family} on $X$. Alternately, one can associate with $\Sigma$ a \emph{closure operator} $\varphi$ on $X$, so that models from $\mathcal{F}_\Sigma$ are exactly the closed sets of $\varphi$. Also, $\Sigma$ can be interpreted as a set of \emph{implications} defining the closure operator $\varphi$. The general connections between Horn formulas (in propositional and first order logic), closure operators and their models were surveyed recently in \cite{AN}.

The knowledge base requires an update if some of the models expire or the new models need to be incorporated into the existing base. In the current work we tackle the problem of re-writing the Horn formula $\Sigma$, when a new model $A$ has to be added or existing model $A$ has to be removed from the family $\mathcal{F}_\Sigma$. 

To avoid misconception, we note that adding set $A$ may result in adding more than just single set to $\mathcal{F}_\Sigma$. Some proper subsets of $A$ may be added as well, which are intersections of $A$ with members of $\mathcal{F}_\Sigma$. This is due to the requirement that the updated family of models must be described by a Horn formula as well, see the classical result in \cite{McK}. 

In the case of the removal of $A$, some more sets have to be removed, if $A$ is the intersection of other sets in $\mathcal{F}_\Sigma$. In this paper, we discuss the case when set $A$ is meet-irreducible in $\mathcal{F}_\Sigma$, thus, only $A$ may be removed.

If the closure operator is encoded through the formal context, the update of the closure system corresponds to adding or removing a row of the table. 

In case of addition of a new set to $\mathcal{F}_\Sigma$, the algorithmic solution for the update of the basis was given in the framework of relational databases in \cite{MR92}, and improvement of the algorithm was suggested in \cite{W95}. The latter publication is the conference proceedings version of the longer and more detailed publication \cite{W94}. Note that in this algorithm the one-set update was considered as one step of iterative process of generating a canonical direct basis of a closure system. The problem was also addressed in the general framework of closure systems, including the FCA framework in \cite{S17} and the framework of Horn-to-Horn belief revision in \cite{ASST}. In the latter paper, the problem was called the Singleton Horn Extension (SHE) Problem. 

In our work we considered two special cases of the SHE problem: when formula $\Sigma$ is given by the \emph{canonical direct basis} of implications defining closure operator $\varphi$, and when it is given by its refined form, the $D$-\emph{basis}. We will assume that one needs an algorithmic solution that provides at the output an updated formula $\Sigma^*(A)$ that is canonical direct, or, respectively, the $D$-basis of the extended closure system.

These two cases will be addressed in sections \ref{CDupdate} and \ref{Dupdate}. Note that one step of iteration process in \cite{W94} also deals with the case of canonical direct basis. Our approach uses a new data structure associated with the basis that allows us to improve the performance time.

In section \ref{Bdirect} we introduce the concept of the binary-direct basis of a closure system and show that the $D$-basis is the shortest binary-direct basis among all the binary-direct implicational bases for the closure system. This allows to extend approach used in the update of the canonical direct basis for the new family of bases, including the $D$-basis.

In section \ref{remove} we present the algorithm of removal one single set from the closure system, assuming that the set is meet-irreducible in the system. The canonical direct basis update is then reduced to the well-known problem of the hypergraph dualization, for which the algorithmic solutions are numerous, see \cite{KBEG06,MU13,FK96}.  

The last section is devoted to the results of algorithmic implementations and testing on various closure systems.

\section{Preparing implicational bases for updates}\label{UpdatePrep}
Prior to the update of the basis, we must do some preparations with regard to the new closed set being added to the family. We assume that the existing basis to update corresponds to a standard closure system (sometimes referred to as a $T\frac{1}{2}$ system \cite{W17}). We choose to operate on the basis representing the equivalent reduced closure system, which is a convenient intermediate form as mentioned in \cite{AN2}. Note that the basis for the reduced closure system will be a superset of the basis for the standard system, but they are equivalent. An algorithm for retrieving a standard system from a reduced system is describe in \cite{AN2}. Since, we are converting from the standard form to the reduced form, we reverse the process which is described as follows:

Say we have a closure system $\langle X, \varphi \rangle$ with corresponding standard system $\langle S, \varphi_S \rangle$. Note that $S \subset X$, and $\varphi_S$ is the restriction of operator $\varphi$ to $S$. If we have $a \in X$ but $a \not\in S$, there is some $B \subset S$ equivalent to $a$. This means that the $\varphi$-closures of these two sets in $\langle X, \varphi \rangle$ are equal, or in terms of implications, $B \to a$ and $a \to b, \forall b \in B$. If $|B| = 1$, we call this a binary equivalence. In the standard system, we remove as many elements from the base set as possible, given all equivalences. However, only the binary equivalences are used to remove elements when deriving the reduced system. So, for each non-binary equivalence $a \leftrightarrow B$, we take $\Sigma' = (\Sigma \cup (B \to a) \cup \{b \to a : b \in B\})^{tr}$. Once we take the expand the basis in this way for each binary equivalence, we obtain the basis corresponding to the desired reduced system.

When a new set is to be added to the basis, some binary equivalences may be broken, and we will need to update the base set and the basis. Say we have some binary equivalence $x \leftrightarrow y$ and want to add new closed set $A$. If $x,y \in A$ or $x,y \not\in A$, the equivalence remains. However, say $x \not\in A, y \in A$. Then we must add to the basis implications, $x \to y$ and $y \to x$. Note that $y$ no longer implies $x$, so $x \to y$ defines the new part of the partial order while we need $y \to x$ so the update process can add in weaker implications that still hold. Additionally, if $x$ was already in our base set, we must add in a copy of each implication containing $x$, replacing $x$ with $y$. After we add these implications, we must again take the transitive closure of the basis. After this is completed for each broken equivalence, we can perform the update procedure.

Our aim for the update is to produce the basis for the updated closure system in its reduced form. However, the basis of the standard closure system is often our desired output. In that case, we can simply apply the aforementioned algorithm from \cite{AN2} to our reduced form. In terms of the basis itself, this process is exactly the process of maximal set projection described after Lemma 13.5 in \cite{MR92} where we project the closed sets into the base set for the standard system.

\section{Update of the canonical direct basis of implications}\label{CDupdate}

In \cite{ASST}, the SHE problem was addressed in the case when
the formula $\Sigma$ describing the knowledge base is assumed to be a conjunction of \emph{prime implicates} of the Horn belief set. Translating this into the language of closure systems, one would call $\Sigma$ \emph{the canonical direct basis}, a type of implicational basis surveyed in \cite{BM10}.

Recall that a formula/implicational set $\Sigma = \{C\rightarrow d: C\cup\{d\} \subseteq X\}$ is called \emph{direct} for closure operator $\varphi$ on $X$, if, for any $Y\subseteq X$,
\[
\varphi(Y) = Y\cup\{d: (C\rightarrow d) \in \Sigma, C\subseteq Y\}.
\]
In other words, the closure of any subset $Y$ can be computed by checking the bodies (left sides) of implications of $\Sigma$ with respect to set $Y$ and expanding $Y$ by their consequents when possible. Each implication of $\Sigma$ is attended only once during the process. Recall that the computation of the closure of $Y$ is generally performed through multiple iteration of $\Sigma$ and expansion of $Y$, see the theoretical background in \cite{W94}, or  through the algorithm known as Forward Chaining \cite{DG84} or LinClosure \cite{MR92}. The canonical direct basis is the smallest implicational set contained in all direct bases defining the same closure operator $\varphi$ on $X$, see \cite{BM10}.

The algorithmic solution for the SHE problem in \cite{ASST} was given in the form of \emph{body-building} formula $\Sigma(A)$, which was produced given a set of implications/formula $\Sigma$ that forms the canonical direct basis of a closure system, and a new set $A$ that needs to be added to the closure system $\mathcal{F}_\Sigma$. We will denote the extended closure system $\mathcal{F}_\Sigma (A)=\mathcal{F}_\Sigma \cup \{F\cap A: F \in \mathcal{F}_\Sigma\}$.

To describe the body-building formula, consider splitting of $\Sigma$ into two subsets: implications $\Sigma_t(A)$ which are true on $A$, and implications $\Sigma_f(A)$ which fail on $A$. 

If $\sigma = (C\rightarrow d) \in \Sigma_f(A)$, then implication fails on set $A$, i.e., $C\subseteq A$ and $d\not \in A$. 
Denote $\sigma(A) =\{C\cup x\rightarrow d: x \in X\setminus (A\cup d)\}$. Note that $\sigma(A) = \emptyset$, if $X\setminus (A\cup d)=\emptyset$. 

Then the body-building formula can be given as 

\[
\Sigma(A)= \Sigma_t(A) \cup \bigcup_{\sigma \in \Sigma_f(A)}\sigma(A)
\]

In other words, the new formula preserves all implications that are true on $A$ and replaces every implication that fails on $A$ by a subset $\sigma(A)$ of new formulas. Each of the new formulas extend the body of a failing implication by a single element not in $A$ and distinct from a consequent.

The formula came up as a consequence to earlier work \cite{LSST}, where the body-building formula was provided to a special extension of the closure system, namely, to the one corresponding to the \emph{saturation operator} $\varphi^*$ associated with given operator $\varphi$. The necessary background for the saturation operator can be found in \cite{CM03}.

In our current work we analyze further the solution for the one-set extension of a closure system. The first observations are collected in the following theorem. We note that item (3) was mentioned in \cite{W94} without proof, and we include the proof for completeness of the exposition.

\begin{theorem} Let $\mathcal{F}_\Sigma$ be a closure system with basis $\Sigma$ and let $A\subseteq X$ be a set not in $\mathcal{F}_\Sigma$. Consider extended closure system $\mathcal{F}_\Sigma (A)$ and body-building formula $\Sigma(A)$.
\begin{itemize}
\item [(1)] Closure system $\mathcal{F}_\Sigma (A)$ comprises the sets that satisfy $\Sigma(A)$.
\item [(2)] Any set $P \not \subseteq A$ that satisfies $\Sigma(A)$ is in $\mathcal{F}_\Sigma (A)$.
\item[(3)] If $\Sigma$ is direct, then $\mathcal{F}_\Sigma (A)$ is defined by basis $\Sigma(A)$, moreover, $\Sigma (A)$ is direct.
\end{itemize}
\end{theorem}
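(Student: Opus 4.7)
The plan is to treat the three parts separately: parts (1) and (2) only unpack definitions, while part (3) requires a case analysis that leverages directness of $\Sigma$ to yield both directness of $\Sigma(A)$ and the converse inclusion implicit in (1).

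For part (1), I take an arbitrary member of $\mathcal{F}_\Sigma(A)$---either some $F \in \mathcal{F}_\Sigma$ or an intersection $F \cap A$---and verify each implication of $\Sigma(A)$ on it. An implication $(C \to d) \in \Sigma_t(A)$ holds on $F$ because it already sits in $\Sigma$, and it holds on $F \cap A$ because truth of $(C \to d)$ on $A$ combined with $C \subseteq F \cap A \subseteq A$ forces $d \in A$, while $C \subseteq F$ forces $d \in F$, hence $d \in F \cap A$. An extended implication $C \cup \{x\} \to d \in \sigma(A)$ holds on $F$ because its premise still contains $C$ and $F$ models $\sigma \in \Sigma$, and it holds vacuously on $F \cap A$ because the choice $x \notin A$ prevents $C \cup \{x\}$ from fitting inside $F \cap A$.

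For part (2), every set of the form $F \cap A$ is a subset of $A$, so $P \not\subseteq A$ belongs to $\mathcal{F}_\Sigma(A)$ exactly when $P \in \mathcal{F}_\Sigma$, i.e., when $P$ satisfies all of $\Sigma$. The $\Sigma_t(A)$-implications are automatic. For $\sigma = (C \to d) \in \Sigma_f(A)$ with $C \subseteq P$, pick a witness $x_0 \in P \setminus A$: if $x_0 = d$ then $d \in P$ already, and otherwise $x_0 \in X \setminus (A \cup \{d\})$, so the implication $C \cup \{x_0\} \to d$ lies in $\sigma(A) \subseteq \Sigma(A)$ with premise inside $P$, which forces $d \in P$.

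For part (3), assume $\Sigma$ is direct and set $Y^\dagger := Y \cup \{d : (C \to d) \in \Sigma(A),\, C \subseteq Y\}$. The goal is to prove $Y^\dagger \in \mathcal{F}_\Sigma(A)$ for every $Y \subseteq X$; together with part (1) this identifies $Y^\dagger$ with the $\mathcal{F}_\Sigma(A)$-closure $\varphi^A(Y)$, giving directness of $\Sigma(A)$, and applied to a $Y$ that already satisfies $\Sigma(A)$ it forces $Y = Y^\dagger \in \mathcal{F}_\Sigma(A)$, supplying the reverse inclusion needed for $\Sigma(A)$ to be a basis of $\mathcal{F}_\Sigma(A)$. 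I split on whether $Y \subseteq A$: when $Y \subseteq A$, only $\Sigma_t(A)$-implications can fire (because the extending $x$ in any $\sigma(A)$-implication lies outside $A$), and directness of $\Sigma$ identifies $Y^\dagger$ with $\varphi(Y) \cap A \in \mathcal{F}_\Sigma(A)$; when $Y \not\subseteq A$, the witness argument from part (2) simulates every failing implication $(C \to d) \in \Sigma_f(A)$ triggered by $Y$ via $C \cup \{x_0\} \to d \in \sigma(A)$ for some $x_0 \in Y \setminus A$, so $Y^\dagger$ matches the one-pass closure of $Y$ under $\Sigma$, which by directness equals $\varphi(Y) \in \mathcal{F}_\Sigma$. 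The delicate step is this simulation: one must verify $x_0 \neq d$ so that $x_0 \in X \setminus (A \cup \{d\})$, which is handled by observing that $x_0 = d$ forces $d \in Y \subseteq Y^\dagger$ already, making the failing implication trivially reflected.
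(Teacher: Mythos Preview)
Your proof is correct and follows essentially the same route as the paper's. The paper dismisses (1) and (2) as ``straightforward'' while you spell them out, and for (3) both arguments define the one-pass $\Sigma(A)$-operator (your $Y^\dagger$, the paper's $\pi(Y)$), split on whether $Y\subseteq A$, and identify the result with $\varphi(Y)\cap A$ or $\varphi(Y)$ respectively; the only cosmetic difference is that the paper names the target operator $\varphi^*$ up front and proves $\pi=\varphi^*$, whereas you phrase the same computation as ``$Y^\dagger\in\mathcal{F}_\Sigma(A)$'' and then invoke (1).
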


\begin{proof}
The proofs of (1) and (2) are straightforward.

(3) Let $\varphi$ be a closure operator on $X$ corresponding to $\mathcal{F}_\Sigma$.  
Define \[\varphi^*(Y) = \begin{cases}
\varphi(Y) & Y \not\subseteq A \\
\varphi(Y) \cap A & Y \subseteq A \\
\end{cases}\]

Then $\varphi^*$ defines a coarsest closure system that includes all $\varphi$-closed sets and set $A$. Thus, $\varphi^*$ is a closure operator for closure system $\mathcal{F}_\Sigma (A)$.

Define the expected direct closure operator for $\mathcal{F}_\Sigma (A)$, \[ \pi(Y) = Y \cup \{b : (Z \to b) \in \Sigma(A),\ Z \subseteq Y\} \]

First, note that any element of $\pi(Y)$ is either an element of $Y$, or it is the consequent of some $Z \to p \in \Sigma(A)$ corresponding to some $C \to b \in \Sigma$ where $C \subseteq Z \subseteq Y$. Thus, $\pi(Y) \subseteq \varphi(Y)$. If $Y \subseteq A$, and we have some $p \in \pi(Y)$, then $p \in Y \subseteq A$ or there is some $Z \to p \in \Sigma(A)$. However, $p$ must be an element of $A$, otherwise $Z \not\subseteq A$ by the body-building process. So, $\pi(Y) \subseteq \varphi^*(Y)$.

Let $Y \not\subseteq A$. Then, $\varphi^*(Y) = \varphi(Y) = Y \cup \{b : (Z \to b) \in \Sigma,\ Z \subseteq Y\}$ because $\Sigma$ is direct. If $b \in A$, or $Z \not\subseteq A$, then $Z \to b$ is not removed and $b \in \pi(Y)$. Assume $Z \subseteq A$ and $b \not\in A$. Then that implication is removed, but we add $Z \cup \{s\} \to b$ to $\Sigma(A)$ for $s \not\in A \cup \{b\}$. In this case there exists $s \not\in A \cup \{b\}$ such that $s \in Y \setminus A$. Thus, for this particular $s$, $Z \cup \{s\} \subseteq Y$ and once again $b \in \pi(Y)$. So, $\varphi^*(Y) = \pi(Y)$ for $Y \not\subseteq A$.

Now let $Y \subseteq A$.

\[ \begin{array}{lll}
\varphi^*(Y) & = & \varphi(Y) \cap A \\
& = & (Y \cup \{b : (Z \to b) \in \Sigma,\ Z \subseteq Y\}) \cap A \\
& = & (Y \cap A) \cup (\{b : (Z \to b) \in \Sigma,\ Z \subseteq Y\} \cap A) \\
& = & Y \cup ( \{b : (Z \to b) \in \Sigma,\ Z \subseteq Y\} \cap A) \\
\end{array} \]

Consider the set $P = \varphi^*(Y) \setminus Y$. Since $\varphi^*(Y) = Y \cup P$, it suffices to show that $P \subseteq \pi(Y)$. Let $p \in P$. Then $p \in A$, and there is some $Z \to p \in \Sigma$. However, since $p \in A$, $Z \to p \in \Sigma(A)$ and thus $p \in \pi(Y)$. So, $P \subseteq \pi(Y)$, and thus $\varphi^*(Y) \subseteq \pi(Y)$. Since for all $Y \subseteq X$, $\varphi^*(Y) = \pi(Y)$, $\Sigma(A)$ is direct.

\end{proof}
It turns out that without the assumption about the directness of $\Sigma$, the body-building formula $\Sigma (A)$ may lack implications to define the updated closure system. 

\begin{example}
\end{example}
Take base set $X=\{z_1,z_2,z_3,d,u\}$ with the basis $\Sigma =\{z_1z_2 \rightarrow d, z_3 d\rightarrow u\}$ and closure system $\mathcal{F}$. Apparently, this basis is not direct, since it misses the resolution implication $z_1z_2z_3\rightarrow u$.

Consider new set $A=\{z_1,z_2,z_3,u\}$ and consider its subset $Z=\{z_1,z_2,z_3\}$. Since the closure of $Z$ in original system is $X$, $Z$ is not an intersection of $A$ with any set from $\mathcal{F}$, it should not be added when $A$ is added.

On the other hand, the implicational set $\Sigma(A)=\{z_3d\rightarrow u\}$ holds on $Z$. Therefore, $\Sigma(A)$ allows more closed sets than $\mathcal{F}(A)$.

In general, one needs to add to $\Sigma(A)$ implications $C\rightarrow d$ that follow from $\Sigma$ and such that $C\cup d \subseteq A$. Say, in this example, one needs additional implication $z_1z_2z_3\rightarrow u$.

\vspace{0.2cm}

If $\Sigma$ is the \emph{canonical} direct basis, the formula $\Sigma(A)$ may not be the \emph{canonical} direct basis of the updated closure system $\mathcal{F}(A)$.
\begin{example}\label{23}
\end{example}

Indeed, consider $X=\{a,b,c,d,e\}$ and $\Sigma = \{e\rightarrow d, ad\rightarrow e, bc\rightarrow d, abc\rightarrow e\}$. If the new set $A=\{a,b,c\}$, then the body-building formula would require to replace $abc\rightarrow e$ by $abcd\rightarrow e$, but stronger implication $ad\rightarrow e$ is already in $\Sigma$. Similarly, one would need to replace $bc\rightarrow d$ by $bce\rightarrow d$, and $\Sigma$ has stronger implication $e\rightarrow d$. Therefore, $\Sigma(A)$ is not canonical direct basis.

\vspace{0.2cm}

The last example highlights an approach to algorithmic solution to SHE that allows to update the canonical direct basis without the need to reduce implications in the body-building formula. For this we consider a modification of body-building procedure.

For the basis $\Sigma$ and $d \in X$ we will call $\Sigma_d = \{C\rightarrow d\}\subseteq \Sigma$ a $d$-sector of $\Sigma$. With each $C \to d$ in $\Sigma_d$ we will store a list $E_C=\{e_1,\dots e_n\}\subseteq X$ such that $\{e_i\}=E_i\setminus C$ for some $E_i \in \Sigma_d$. Since every implication in $\Sigma_d$ has $d$ as its consequent, we simply store the pair $(C, E_C)$ for each implication in the sector. For Example \ref{23}, the sectors would be $\Sigma_d = \{(bc,\{e\}), (e, \emptyset)\}$, and $\Sigma_e=\{(abc, \{d\}), (ad, \emptyset)\}$.

Consider the modification to the body-building formula. Given basis $\Sigma$ and new set $A$, let $\sigma=(C\rightarrow d)$ be an implication from $\Sigma$, i.e., $(C,E_C)\in \Sigma_d$, and suppose that $\sigma$ fails on $A$.
Define $\sigma^*(A)=\{C\cup x\rightarrow d: x \in X\setminus (A\cup d\cup E_C)\}$. Then modified body-building formula is
\[
\Sigma^*(A)= \Sigma_t(A) \cup \bigcup_{\sigma \in \Sigma_f(A)}\sigma^*(A)
\]

\begin{theorem}
If $\Sigma$ is a canonical direct basis of closure system $\mathcal{F}$, and $\mathcal{F}$ is being extended by new set $A$, then $\Sigma^*(A)$ is the canonical direct basis of $\mathcal{F} (A)$.
\end{theorem}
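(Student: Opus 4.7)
The plan is to exploit the uniqueness of the canonical direct basis: once I know that $\Sigma^*(A)$ is direct for $\mathcal{F}(A)$ and that every implication $(B \to d) \in \Sigma^*(A)$ has body $B$ minimal among covers of $d$ in the updated system, I am done. Since $\Sigma^*(A) \subseteq \Sigma(A)$, the directness of $\Sigma(A)$ established in the previous theorem reduces the first task to checking that the discarded implications are recoverable, and the canonicity of the input basis $\Sigma$ will drive the minimality argument.

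For directness, I would enumerate the missing implications: $\Sigma(A) \setminus \Sigma^*(A)$ consists exactly of the implications $C \cup \{x\} \to d$ with $(C \to d) \in \Sigma_f(A)$ and $x \in E_C$. By definition of $E_C$, there is $(E \to d) \in \Sigma_d$ with $E \setminus C = \{x\}$, so $E \subseteq C \cup \{x\}$ and $x \in E$. Since $x \notin A$, also $E \not\subseteq A$, hence $E \to d$ is true on $A$ and sits in $\Sigma_t(A) \subseteq \Sigma^*(A)$. For any $Y \supseteq C \cup \{x\}$ we have $Y \supseteq E$, so the consequent $d$ is produced by a single application of an implication from $\Sigma^*(A)$, matching what $\Sigma(A)$ would produce; hence $\Sigma^*(A)$ is direct.

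For minimality I would split into two cases. If $(C \to d) \in \Sigma_t(A)$, then $C$ was a minimal cover of $d$ in $\mathcal{F}$; since $\mathcal{F}(A) \supseteq \mathcal{F}$, closures in $\mathcal{F}(A)$ are contained in closures in $\mathcal{F}$, so minimality transfers to the new system. If $(C \cup \{x\} \to d)$ comes from $\sigma^*(A)$ with $x \notin A \cup \{d\} \cup E_C$, I take an arbitrary $B' \subsetneq C \cup \{x\}$. When $x \notin B'$, $B' \subseteq C \subseteq A$ forces $\varphi^*(B') \subseteq A$, which excludes $d$. When $x \in B'$, write $B' = C' \cup \{x\}$ with $C' \subsetneq C$; then $B' \not\subseteq A$, so $\varphi^*(B') = \varphi(B')$, and directness of $\Sigma$ ensures any such $d$-cover contains some body $D$ from $\Sigma_d$ with $D \subseteq C' \cup \{x\}$. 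Canonicity of $\Sigma$ forbids $D \subsetneq C$, leaving only $D \setminus C = \{x\}$, hence $x \in E_C$, contradicting the choice of $x$.

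I expect the main obstacle to be the minimality step in case (ii), because it requires showing that the exclusion $x \notin E_C$ is sharp enough to block every potential shorter $d$-cover after the update: the $E_C$ data structure must capture exactly those $x$ that would render $C \cup \{x\}$ non-minimal. Everything else amounts to careful bookkeeping with directness and the definitions of $\Sigma_t(A)$, $\Sigma_f(A)$, and $\sigma^*(A)$.
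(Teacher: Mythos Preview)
Your proposal is correct and follows essentially the same route as the paper. Both arguments hinge on the single observation that $x\in E_C$ exactly when some body $E\in\Sigma_d$ with $x\in E$ satisfies $E\subseteq C\cup\{x\}$; since $x\notin A$ this $E\to d$ lands in $\Sigma_t(A)$ and witnesses the redundancy of $C\cup\{x\}\to d$. The paper packages this as a pairwise containment analysis inside the direct basis $\Sigma(A)$ (true/true, false/false, true/false pairs), while you verify directness and body-minimality of $\Sigma^*(A)$ separately against $\varphi^*$; these are two presentations of the same computation. One small point: in case~(ii) with $x\in B'$, your line ``canonicity of $\Sigma$ forbids $D\subsetneq C$'' should also dispose of $D=C$, which is immediate since $C\subseteq A$ and $x\notin A$ force $C\not\subseteq C'\cup\{x\}$.
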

\begin{proof}
We want to identify implications in direct basis $\Sigma(A)$ which should be deleted to make it canonical direct.

If $C\rightarrow d$ and $G\rightarrow d$ are two implications in $\Sigma$, then $C\not \subseteq G$ and $G\not \subseteq C$, because $\Sigma$ is canonical direct.
If both these implications are in $\Sigma_f(A)$, then $C,G\subseteq A$, and for any $x,y \in X \setminus (A\cup d)$ one has $C\cup x\not \subseteq G\cup y$ and $G\cup y\not \subseteq C\cup x$.

If both implications are in $\Sigma_t(A)$, then they are also in $\Sigma(A)$ without modification.

Thus, the only possibility that one body part is a subset of the other in $\Sigma(A)$ is when $C\rightarrow d$ is in $\Sigma_f(A)$, and $G\rightarrow d$ is in $\Sigma_t(A)$. Thus, we would have $G\subseteq C\cup x$, for some $x \in X \setminus (A\cup d)$. Given that $G\not\subseteq C$, it implies $\{x\}=G\setminus C$, therefore, $x \in E_C$. Thus, we do not need to add implication $C\cup x\rightarrow d$ whenever $x \in E_C$.
\end{proof}

\section{Binary-direct basis of a closure system and the $D$-basis}\label{Bdirect}

This part of the work is devoted to the algorithmic solution for the case when $\Sigma$ is the $D$-basis for the closure operator $\varphi$ and updated formula $\Sigma^*(A)$ is expected to be the $D$-basis of the expanded closure system. 

The $D$-basis was introduced in \cite{ANR11} as a refined and shorter version of the canonical direct basis: the former is a subset of the latter, while the $D$-basis still possessing the form of the directness property, known as \emph{ordered direct} \cite{ANR11}. The closure of any subset $Y$ can be computed attending the implications of the $D$-basis $\Sigma$ only once, when it is done in the specific order. 

The part of the basis containing implications $x\rightarrow y$, i.e. implications with only two variables from $X$, is called \emph{binary}, and it plays a special role in the computation of the closures.

We will assume that the binary part $\Sigma^b$ of basis $\Sigma$  is \emph{transitive}, i.e., if $a\rightarrow b$ and $b\rightarrow c$ are in $\Sigma^b$, then $a\rightarrow c$ is also in $\Sigma^b$.

We will use notation $Y_\downarrow =\{c \in X: (y\rightarrow c)\in \Sigma^b \text{ for some } y \in Y\}$.

Recall that for subsets $Y,Z \subseteq X$ we write $Z\ll Y$ if $Z\subseteq Y_\downarrow$ and say that $Z$ \emph{refines} $Y$.

The following statement describes the relation between the canonical direct and the $D$-basis of a closure system.

\begin{proposition}{\cite{ANR11}}\label{CD} The $D$-basis of a closure system can be obtained from the canonical direct basis $\Sigma_{cd}$ by removing every implication $C\rightarrow d$ for which there exists $D\rightarrow d$ in $\Sigma_{cd}$ with $D\subseteq C_\downarrow$.
\end{proposition}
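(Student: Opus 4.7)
The plan is to show that the set $\Sigma'$ obtained from $\Sigma_{cd}$ by the described removal both defines the same closure operator as $\Sigma_{cd}$ and has the structural features of the $D$-basis, namely ordered directness with $\ll$-minimal bodies in each consequent sector.

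First I would verify that $\Sigma'$ is implicationally equivalent to $\Sigma_{cd}$. Suppose $C \to d$ has been removed because some $D \to d \in \Sigma_{cd}$ satisfies $D \subseteq C_\downarrow$. Since the binary part $\Sigma^b$ is transitive and retained in full, from $C$ one can derive $C_\downarrow$ using $\Sigma^b$ and in particular obtain all of $D$, then apply $D \to d$. If $D \to d$ itself was also removed we iterate: since $\ll$ is well-founded on the finite set of bodies for $d$, any chain $C \gg D \gg D_1 \gg \cdots$ terminates at a retained implication $D^\ast \to d$ still satisfying $D^\ast \subseteq C_\downarrow$ (by transitivity of $\Sigma^b$), and applying it recovers $C \to d$.

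Next I would verify ordered directness. Given $Y \subseteq X$, first produce $Y^b = Y \cup Y_\downarrow$ in one binary pass; by directness of $\Sigma_{cd}$, every $d \in \varphi(Y) \setminus Y^b$ is the consequent of some $Z \to d \in \Sigma_{cd}$ with $Z \subseteq Y^b$. By the argument above, there is a retained $D^\ast \to d \in \Sigma'$ with $D^\ast \subseteq Z_\downarrow$, and since $Z \subseteq Y^b$ forces $Z_\downarrow \subseteq Y_\downarrow \subseteq Y^b$ by transitivity of $\Sigma^b$, the implication $D^\ast \to d$ fires against $Y^b$ in the non-binary pass. Thus $\Sigma'$ is ordered direct in the binary-first ordering.

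Finally, the $\ll$-minimality among retained bodies for each consequent is immediate from the removal rule: any two retained $C \to d, D \to d$ must satisfy $D \not\subseteq C_\downarrow$ and $C \not\subseteq D_\downarrow$. The main obstacle is bridging this purely implicational characterization with the original lattice-theoretic definition of the $D$-basis via the $D$-relation on join-irreducibles given in \cite{ANR11}; one must show that an implication sits in the $D$-basis exactly when its body is $\ll$-minimal among bodies with the same consequent in $\Sigma_{cd}$. Once this identification is in place, the proposition follows from the two preceding verifications.
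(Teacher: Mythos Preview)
The paper does not supply a proof of this proposition; it is cited from \cite{ANR11} and subsequently used in the sketch following Theorem~\ref{BD}. There is therefore no in-paper argument to compare yours against.

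On its own merits your outline is reasonable, but two points deserve attention. First, the assertion that $\ll$ is well-founded on the bodies in a fixed $d$-sector of $\Sigma_{cd}$ does not follow from finiteness alone, since $\ll$ is a priori only a preorder; you must exclude distinct minimal premises $C,D$ for $d$ with $C_\downarrow=D_\downarrow$. In a reduced system this does hold (the $\geq$-maximal elements of $C_\downarrow$ must lie in $C\cap D$ and already generate $d$, so $\subseteq$-minimality of premises in $\Sigma_{cd}$ forces $C=D$), but the argument should be stated rather than assumed.

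Second, and more importantly, your last paragraph openly defers the identification of $\Sigma'$ with the $D$-basis to the lattice-theoretic definition in \cite{ANR11}. Since that identification is precisely the content of the proposition, what you have written is a reduction rather than a proof: you have shown that $\Sigma'$ is ordered direct and consists of the $\ll$-minimal covers inside $\Sigma_{cd}$, but you still owe the verification that an implication belongs to the $D$-basis (as defined via the $D$-relation on join-irreducibles) exactly when its premise is $\ll$-minimal among premises with the same consequent in $\Sigma_{cd}$. That equivalence is the substantive step, and it lives in \cite{ANR11} rather than in your sketch.
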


Note that $D$ cannot be simply an expansion of $C$, because both implications are in the canonical direct basis. Therefore, $|C\setminus D|\geq 1$.

It is well-known that the direct bases of $(X,\varphi)$ are characterized by the property:
\[
\varphi(Y) = Y \cup \{d: (C\rightarrow d)\in \Sigma, C\subseteq Y\}.
\]

In order to characterize the $D$-basis we introduce the following definition.

\begin{definition}
Basis $\Sigma$ for a closure system $(X,\varphi)$ is called b-direct (a shortcut for `binary-direct'), if for every $Y\subseteq X$, $\varphi(Y) = Y_\downarrow \cup \{d: (C\rightarrow d)\in \Sigma, C\subseteq Y_\downarrow\}$
\end{definition}

\begin{proposition}
Any direct basis is b-direct, and every b-direct basis is ordered direct but the inverse statements are not true. 
\end{proposition}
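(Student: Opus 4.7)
The plan is to establish each forward implication directly from the defining identity and then exhibit two small counterexamples on $X=\{a,b,c,d\}$ to witness strictness of the converses.

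For the first implication, assume $\Sigma$ is direct. Every binary $y\to c\in\Sigma^{b}\subseteq\Sigma$ forces $c\in\varphi(\{y\})\subseteq\varphi(Y)$ whenever $y\in Y$, hence $Y\subseteq Y_\downarrow\subseteq\varphi(Y)$, and by monotonicity and idempotence $\varphi(Y_\downarrow)=\varphi(Y)$. Applying the direct-basis identity to the set $Y_\downarrow$ gives
\[
\varphi(Y)=\varphi(Y_\downarrow)=Y_\downarrow\cup\{d:(C\to d)\in\Sigma,\ C\subseteq Y_\downarrow\},
\]
which is the b-direct identity. This step is essentially a one-line substitution.

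For the second implication, assume $\Sigma$ is b-direct. I would construct an explicit ordering of $\Sigma$ that witnesses ordered directness: list the binary implications first (in any order) and the non-binary implications afterwards (in any order). Since $\Sigma^{b}$ is transitive, a single sweep through the binary block starting from $Y$ produces exactly $Y_\downarrow$; one then sweeps through the non-binary block, firing every $C\to d$ whose body is contained in the current running set. By the b-direct identity this single pass collects precisely the elements of $\varphi(Y)$. The delicate point, which I expect to be the main obstacle, is that a non-binary rule may add some $d\notin Y_\downarrow$ after the binary sweep has ended, and a binary $d\to e$ could a priori then demand $e$; however $e\in\varphi(Y)$, and b-directness forces $e$ to lie in $Y_\downarrow$ or to be the consequent of some $C'\to e$ with $C'\subseteq Y_\downarrow$, both of which are already accounted for by the chosen ordering with no need to revisit the binary block.

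For strictness, the first witness is $\Sigma_{1}=\{a\to b,\ bc\to d\}$: a short case check over all subsets of $X$ shows $\Sigma_{1}$ is b-direct, yet on $Y=\{a,c\}$ the direct formula yields only $\{a,b,c\}$ while $\varphi(Y)=\{a,b,c,d\}$, so $\Sigma_{1}$ is not direct. The second witness is $\Sigma_{2}=\{ab\to c,\ c\to d\}$: the ordering $(ab\to c,\ c\to d)$ makes a single forward-chaining pass produce $\varphi(Y)$ for every $Y$, so $\Sigma_{2}$ is ordered direct; however, on $Y=\{a,b\}$ one has $Y_\downarrow=\{a,b\}$ and the b-direct formula returns $\{a,b,c\}$, missing $d\in\varphi(Y)$, so $\Sigma_{2}$ is not b-direct.
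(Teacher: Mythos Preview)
Your argument is correct. The two forward implications are handled exactly as the paper intends (substitute $Y_\downarrow$ into the direct formula; order the binary block before the non-binary block and use transitivity of $\Sigma^b$ together with the b-direct identity to see that one pass suffices), and your treatment of the ``revisit'' issue in the second implication is the right one: any $e$ produced after the binary sweep already appears in $Y_\downarrow\cup\{d:(C\to d)\in\Sigma,\ C\subseteq Y_\downarrow\}$ by the b-direct identity, so nothing is missed.

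Where you differ from the paper is in the counterexamples. The paper does not give explicit bases; it argues abstractly that the $D$-basis is b-direct but not direct (for the first converse), and that the $E$-basis of a closure system without cycles is ordered direct but, being strictly shorter than the $D$-basis, cannot be b-direct by the forward reference to Theorem~\ref{BD} (for the second converse). Your concrete four-element examples $\Sigma_1=\{a\to b,\ bc\to d\}$ and $\Sigma_2=\{ab\to c,\ c\to d\}$ are exactly instances of these phenomena: $\Sigma_1$ is the $D$-basis of its closure system (the canonical direct basis has the extra $ac\to d$, removed because $bc\ll ac$), and $\Sigma_2$ is an ordered-direct basis whose non-binary rule feeds a binary one, which is precisely what b-directness forbids. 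Your route is more elementary and self-contained, avoiding the forward reference to Theorem~\ref{BD}; the paper's route has the advantage of explaining structurally \emph{why} such examples must exist.
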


We only mention why the inverse statement are not true. 
As for the first statement, the computation of the closure for a b-direct basis is performed in two stages: first, the binary implications are applied, and then the computation can be done on the expanded set as for the direct basis. In particular, the $D$-basis which is b-direct is not direct. 

For the second statement, we notice that closure systems without cycles may have ordered basis shorter than the $D$-basis, namely, the $E$-basis \cite{ANR11}. According to Theorem \ref{BD} below, the $D$-basis is the shortest b-direct basis, therefore, the $E$-basis is ordered direct but not b-direct.

Thus, the property of being b-direct is stronger than ordered directness. After the binary implications are applied, the order of the remaining implications does not matter, like in a direct basis. This is in contrast with the $E$-basis, for example, where the order of non-binary implications is rather specific.

The following statement is proved similar to Theorem 14 in \cite{BM10}.

\begin{theorem}
Basis $\Sigma$ is b-direct iff $\Sigma^b$ is transitive and for any $A\rightarrow b$ and $C\cup b \rightarrow d$ in $\Sigma$ there exists $G\subseteq (A\cup C)_\downarrow$ such that $G\rightarrow d$ is also in $\Sigma$.
\end{theorem}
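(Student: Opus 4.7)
The plan is to work with the b-direct closure formula $\pi(Y)=Y_\downarrow\cup\{d:(G\to d)\in\Sigma,\,G\subseteq Y_\downarrow\}$ and characterize b-directness as the identity $\pi(Y)=\varphi(Y)$ for every $Y\subseteq X$; transitivity of $\Sigma^b$ is treated as a standing assumption that comes along with the definition.

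For $(\Rightarrow)$, given $A\to b$ and $C\cup b\to d$ in $\Sigma$, I would set $Y=A\cup C$. Successive application of the two implications places $d$ in $\varphi(Y)$, so by b-directness either $d\in Y_\downarrow=(A\cup C)_\downarrow$, in which case a singleton binary witness $\{y\}\to d$ with $y\in A\cup C$ supplies the required $G$; or $d$ is the head of some $G\to d\in\Sigma$ with $G\subseteq(A\cup C)_\downarrow$, which is what is to be shown.

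For $(\Leftarrow)$, the goal is to show that $\pi(Y)$ is a model of $\Sigma$; together with the trivial $\pi(Y)\subseteq\varphi(Y)$ and $Y\subseteq\pi(Y)$, this forces $\pi(Y)=\varphi(Y)$. I would first verify closure of $\pi(Y)$ under $\Sigma^b$: for $x\in\pi(Y)\setminus Y_\downarrow$ arising as the head of $A\to x\in\Sigma$ with $A\subseteq Y_\downarrow$, and $x\to y\in\Sigma^b$, resolving $A\to x$ against $\{x\}\to y$ produces $G\to y\in\Sigma$ with $G\subseteq A_\downarrow\subseteq Y_\downarrow$ by transitivity, so $y\in\pi(Y)$; the case $x\in Y_\downarrow$ is immediate from transitivity. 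For a general $G\to z\in\Sigma$ with $G\subseteq\pi(Y)$, I would then induct on $|G\setminus Y_\downarrow|$: the base case is definitional, and in the inductive step I would pick $b\in G\setminus Y_\downarrow$ with producing implication $A\to b$ ($A\subseteq Y_\downarrow$) and apply resolution to obtain $G'\to z\in\Sigma$ with $G'\subseteq(A\cup(G\setminus b))_\downarrow\subseteq\pi(Y)$, then feed $G'\to z$ back into the induction.

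The hard part will be ensuring the induction terminates: the quantity $|G'\setminus Y_\downarrow|$ does not automatically drop below $|G\setminus Y_\downarrow|$ in a single resolution step, because the binary expansion of $(G\setminus b)_\downarrow$ can create fresh elements of $\pi(Y)\setminus Y_\downarrow$. I would circumvent this by iterating resolution while choosing $b$ carefully, for instance as a minimal element of $G\cap(\pi(Y)\setminus Y_\downarrow)$ in the pre-order induced by $\Sigma^b$, and measure progress lexicographically on the pair (number of body elements outside $Y_\downarrow$, body size), using the finiteness of $\Sigma$ to force eventual termination at some $G^*\to z\in\Sigma$ with $G^*\subseteq Y_\downarrow$, so that $z\in\pi(Y)$ as required.
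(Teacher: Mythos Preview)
Your overall architecture matches what the paper intends: it simply says the result ``is proved similar to Theorem~14 in \cite{BM10}'' and gives no details, so you are correctly reconstructing the Bertet--Monjardet argument for direct bases and adapting it to the $\downarrow$-expanded resolution condition. Your forward direction and your reduction of the backward direction to ``$\pi(Y)$ is a model of $\Sigma$'' are both fine, and you are right that the only nontrivial point is termination of the resolution chain when the $\downarrow$-operator is present.

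The gap is in the specific termination device you propose. Choosing $b$ \emph{minimal} in $G\cap(\pi(Y)\setminus Y_\downarrow)$ does not control $G'$: after resolving with $A_b\to b$ you have $G'\subseteq (A_b\cup(G\setminus b))_\downarrow$, and the troublesome part $G'\setminus Y_\downarrow$ is contained in $((G\setminus b)\setminus Y_\downarrow)_\downarrow$, which can freely pick up new elements below the \emph{non-minimal} members of $G\setminus Y_\downarrow$. Neither coordinate of your lexicographic pair $(|G\setminus Y_\downarrow|,\,|G|)$ is forced to drop, so the induction as stated can cycle. Invoking ``finiteness of $\Sigma$'' does not rescue this without an actual decreasing quantity.

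The fix is to reverse your choice and change the measure. Pick $b$ \emph{maximal} in $G\setminus Y_\downarrow$ with respect to the partial order given by $\Sigma^b$, and track $\mu(G)=|G_\downarrow\setminus Y_\downarrow|$. Since $A_b\subseteq Y_\downarrow$ and $(Y_\downarrow)_\downarrow=Y_\downarrow$ by transitivity, one gets $G'_\downarrow\setminus Y_\downarrow\subseteq (G\setminus b)_\downarrow\setminus Y_\downarrow\subseteq G_\downarrow\setminus Y_\downarrow$; maximality of $b$ guarantees $b\notin g_\downarrow$ for every $g\in G\setminus\{b\}$ (if $g\in Y_\downarrow$ then $g\geq b$ would force $b\in Y_\downarrow$; if $g\in G\setminus Y_\downarrow$ then $g\not> b$ by maximality), so $b$ witnesses strict inclusion and $\mu(G')<\mu(G)$. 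With this correction your inductive scheme goes through verbatim and yields the b-direct identity $\pi(Y)=\varphi(Y)$.
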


The following statement generalizes the description of the canonical direct basis in \cite{BM10}.

\begin{theorem}\label{BD}
Let $(X,\varphi)$ be any closure system.
\begin{itemize}
\item[(1)] 
There exists a smallest b-direct basis, i.e. $\Sigma_{bd}$ such that it is contained in any b-direct basis for a given closure system. 
\item[(2)] $\Sigma_{bd}$ satisfies property : for any two distinct $Z\rightarrow d$, $Y\rightarrow d$ in $\Sigma_{bd}$, $Z\not\subseteq Y_\downarrow$. 
\item[(3)] Basis $\Sigma_{bd}$ is the $D$-basis.
\end{itemize}
\end{theorem}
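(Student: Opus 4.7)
The plan is to follow the template of Theorem 14 in \cite{BM10}, which characterizes the canonical direct basis as the smallest direct basis, and adapt it to the b-direct setting. I define $\Sigma_D$ as produced by Proposition \ref{CD}: start from $\Sigma_{cd}$ and remove every implication $C\to d$ for which some $D\to d \in \Sigma_{cd}$ satisfies $D\subseteq C_\downarrow$ with $D\neq C$. The three items are then established in the order (3), (2), (1).

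For part (3), I would verify that $\Sigma_D$ is b-direct via the characterization in the theorem immediately preceding \ref{BD}. The binary part $\Sigma_D^b = \Sigma_{cd}^b$ is transitive by the standing assumption. For the non-binary condition, given $A\to b$ and $C\cup b \to d$ in $\Sigma_D$, directness of $\Sigma_{cd}$ applied by resolution on $b$ produces some $H\to d \in \Sigma_{cd}$ with $H\subseteq A\cup C$. I then select $G$ to be $\ll$-minimal among all $G'\to d \in \Sigma_{cd}$ with $G'\subseteq (A\cup C)_\downarrow$; by Proposition \ref{CD} such a $\ll$-minimal $G\to d$ is retained when passing from $\Sigma_{cd}$ to $\Sigma_D$. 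Part (2) is an immediate consequence of the same construction: if distinct $Z\to d$ and $Y\to d$ both lie in $\Sigma_D$ with $Z\subseteq Y_\downarrow$, then $Y\to d$ would have been removed in forming $\Sigma_D$, contradicting $Y\to d\in\Sigma_D$.

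For part (1), I would show $\Sigma_D \subseteq \Sigma$ for any b-direct basis $\Sigma$ of $(X,\varphi)$. The binary parts agree since they are determined by $\varphi$ itself. For a non-binary $C\to d \in \Sigma_D$, the membership $C\to d \in \Sigma_{cd}$ forces $d\notin C_\downarrow$ (otherwise $C\to d$ would be covered by a binary implication and absent from $\Sigma_{cd}$). B-directness of $\Sigma$ then yields $Z\to d \in \Sigma$ with $Z\subseteq C_\downarrow$; since $d\in\varphi(Z)$ and $\Sigma_{cd}$ is direct, some $Z'\subseteq Z$ satisfies $Z'\to d \in \Sigma_{cd}$, and the $\ll$-minimality of $C$ in $\Sigma_D$ forces $Z'=C$, so $C\subseteq Z \subseteq C_\downarrow$.

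The main obstacle is the final step of part (1): upgrading $C\subseteq Z \subseteq C_\downarrow$ to $Z=C$. This is the genuine gap between the direct setting, where directness delivers $Z\subseteq C$ for free, and the b-direct setting, where only $Z\subseteq C_\downarrow$ is available. I expect to close this by a body-minimization argument on $\Sigma$, replacing $Z$ by its $\ll$-minimal sub-body $C\in\Sigma_{cd}$ while preserving b-directness via Proposition \ref{CD}, and then invoking the $\ll$-minimality of $C$ inside $\Sigma_D$ a second time to rule out any strictly larger body. If the step resists, a fallback is to prove the containment not literally but modulo the equivalence $Z \equiv C$ whenever $C\subseteq Z \subseteq C_\downarrow$, which still justifies calling $\Sigma_D$ the smallest b-direct basis in cardinality and identifies it uniquely via property (2).
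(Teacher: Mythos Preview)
Your approach coincides with the paper's one-sentence sketch (derive the $D$-basis via Proposition~\ref{CD}, read off~(2), argue minimality), and your arguments for~(2) and~(3) are correct and more detailed than what the paper supplies.

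The obstacle you isolate in~(1) is genuine, and your first proposed fix does not close it: replacing $Z$ by $C$ inside $\Sigma$ manufactures a different basis, not a proof that $C\to d$ was already in $\Sigma$. In fact the literal assertion ``$\Sigma_D$ is contained in every b-direct basis'' is false. Take $X=\{a,b,c,d\}$ with sole binary relation $a\to b$ and closure system defined by $\Sigma_{cd}=\Sigma_D=\{a\to b,\ ac\to d\}$. Then $\Sigma'=\{a\to b,\ abc\to d\}$ is b-direct for the same system (for $Y=\{a,c\}$ one has $Y_\downarrow=\{a,b,c\}$, so $abc\to d$ fires), yet $ac\to d\notin\Sigma'$. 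The paper's proof only argues that no implication can be removed from $\Sigma_D$ without losing b-directness, which is minimality, not containment in all b-direct bases; it does not address this point.

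Your fallback is the correct resolution. What your argument actually establishes is: for every b-direct $\Sigma$ and every $C\to d\in\Sigma_D$ there is $Z\to d\in\Sigma$ with $C\subseteq Z\subseteq C_\downarrow$, hence $Z_\downarrow=C_\downarrow$. Since premises in $\Sigma_{cd}$ are antichains in $(X,\geq)$, distinct $C\to d$ in $\Sigma_D$ have distinct $C_\downarrow$, so this correspondence is injective and $|\Sigma_D|\le|\Sigma|$. Combined with property~(2) this identifies $\Sigma_D$ as the unique b-direct basis of minimum cardinality, which is the right way to read~(1).
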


The proof is done by observing that the $\Sigma_{bd}$ basis can be obtained from the canonical direct as described in Proposition \ref{CD}, therefore it satisfies property in (2), and removing any implication from it will bring to a failure of the property of the b-direct basis.

\section{Algorithm of the $D$-basis update in SHE problem}\label{Dupdate}

Now we describe an effective algorithm for the solution of SHE problem, when the implicational basis $\Sigma$ is the $D$-basis of the associated closure operator $\varphi$.

First, we follow the steps of \ref{UpdatePrep} to prepare the basis for the update and make sure it is in reduced form. Our reasons for working with the reduced system are twofold. First, the $D$-basis describes a partial order $\langle X, \ll \rangle$. A definition for the relation $\ll$ is that for $Y,Z \subseteq X$, $Y \ll Z \text{ iff } \forall y \in Y, \exists z \in Z \ y \in \phi(z)$. If the system is not reduced, then $\ll$ does not satisfy the property of anti-symmetry and is thus not a partial order. Second, the preparation of the basis is easier when only binary equivalences are present. An optimization for the preparation procedure is evident when we have broken equivalence $x \leftrightarrow y$ and we add in all $C \cup y \to d$ such that $C \cup x \to d \in \Sigma$. When $x \to y$ holds, $C \cup y \to d \ll C \cup x \to d$, so we replace $C \cup x \to d$ by $C \cup y \to d$ instead of simply adding them in as in the general case. Now, we perform the actual update procedure.

Let $A$ be a new set that needs to be added to $\mathcal{F}_\Sigma$.

We will denote $\geq$ the partial order imposed on $X$ by the binary part of the basis $\Sigma^b$: $y\geq x$ iff $y\rightarrow x$ is in $\Sigma^b$. We denote $\Sigma_f^b=\{(a\rightarrow c)\in \Sigma^b: a \in A, c \not \in A\}$ the set of binary implications failing on $A$ and $\Sigma^b_t(A)=\Sigma^b\setminus \Sigma_f^b(A)$ the set of all binary implications that hold on $A$.

We define the set of \emph{target} elements $T(A) = \{c \in X\setminus A : (a\rightarrow c)\in \Sigma^b_f \}$. So, target elements are heads of binary implications that fail on set $A$. Whenever target element $c$ is in the body of some implication $C\rightarrow d$ in $\Sigma^{nb}$, another implication with element from $A$ replacing $c$ may become a minimal cover for $d$ in the updated basis $\Sigma(A)$. Thus, we need a process to add such implications to the basis in case they are part of $\Sigma(A)$.

We will denote by $\geq_A$ a partial order imposed on $X$ by $\Sigma^b_t(A)=\Sigma^b(A)$, i.e., the binary part of updated basis $\Sigma(A)$.
Apparently, $y\geq_A x$ implies $y\geq x$. Moreover, the inverse holds when $x\in A$ or $x,y \in X\setminus A$. Similar to $Y_\downarrow$ we use $Y_{\downarrow_A}=\{c: y\geq_A c \text{ for some } y \in Y\}$. We say that implication $Z\rightarrow d$ $\ll_A$-refines $Y\rightarrow d$, or $Z\ll_A Y$, if $Z\subseteq Y_{\downarrow_A}$.\\

The update of $\Sigma$ proceeds in several stages.\\

(I) For each $x \in T(A)$ define $A_x = \{a \in A: a \geq x \text{ and } a \text{ is minimal in } A \text{ with this}\\ \text{property} \}$. Note that $a\not \geq_A x$, for all $a \in A_x$ and that $A_x \subseteq A$. Elements from $A_x$ are replacements of target element $x$, if it appears in the body of any implication in $\Sigma^{nb}$. Also note that, for any $a \in A$ and $x\in X\setminus A$ such that $a\geq x$, there exists $a'\in A_x$ with $a\geq_A a'\geq x$.\\

(II) We will call this part of the procedure $A$-\emph{Lift}, indicating that some new implications $\Sigma^L$ will be added to $\Sigma$ that replace elements $x$ from the bodies of existing implications by elements in $A_x$. The $A$-lift adds implications which may be needed in the body-building phase but have refinements in $\Sigma$.

More precisely, if $C\rightarrow d$ is a non-binary implication, and $C$ has elements from $T(A)$, then we want to add implications $C'\rightarrow d$, when at least one element $x \in C\cap T(A)$ is replaced by some element $a$ from $A_x$. 

We could use notation $\begin{pmatrix} a\\x\end{pmatrix}C''\rightarrow d$ for one instance of $A$-Lift, which records new implication together with element $x\in C$  which is lifted to $a \in A_x$, so that $C=x\cup C''$ and $C'=a\cup C''$.

Note that $|C'|\leq |C|$ and that $C\ll C'$ in the old $\Sigma^b$, but it is no longer true in $\Sigma^b_t(A)$. 
Also note that in the case where $C'' \cap a_{\downarrow A} \neq \emptyset$, we can add a stronger implication than $C'' \cup a$. For example, if we have $C'' = D \cup b$ for $b \in a_{\downarrow A}$, then $C'' \cup a = D \cup b \cup a$, so $D \cup a \ll_A C'' \cup a$. For the $A$-lift, we then want to lift $C' \to d$ to new implications $\begin{pmatrix} a\\x \end{pmatrix}(C'' \setminus a_{\downarrow A}) \to d$.
\begin{example}\label{AliftEx}
\end{example}
Consider $X=\{x,y,d,a,a'\}$ and $\Sigma = \{a\rightarrow x, a'\rightarrow y, xy\rightarrow d\}$. Note that $\Sigma_{cd}$ for the closure system defined by $\Sigma$ will also have $ay\rightarrow d$, $xa'\rightarrow d$ and $aa'\rightarrow d$, but implication $xy\rightarrow d$ refines all three.

If $A=\{a,a'\}$, then  binary implications $a\rightarrow x, a'\rightarrow y$ do not hold on $A$, and the set of targets is $T(A)=\{x,y\}$.

We have $A_x=\{a\}, A_y=\{a'\}$, and implications $\Sigma^L=\{\begin{pmatrix} a\\x\end{pmatrix}y\rightarrow d, x\begin{pmatrix} a'\\y\end{pmatrix}\rightarrow d, \begin{pmatrix} a\\x\end{pmatrix}\begin{pmatrix} a'\\y\end{pmatrix}\rightarrow d\}$ 
are obtained by $A$-Lift from  $xy\rightarrow d$.

Note that implication $aa'\rightarrow d$ in $\Sigma^L$, does not hold on $A$, thus, one needs to modify it further on the body-building stage of the algorithm.\\


(III) Any $(Y\rightarrow d)\in \Sigma^L$ may have a $\ll_A$-refinement in set of implications $\Sigma^{nb}\cup \Sigma^L$.

The following observation shows how to identify implications in $\Sigma^L$ that may have a $\ll_A$-refinement, and thus, can be removed. The refinement may be an original implication from $\Sigma$, of from the $A$-Lift of an original implication.

\begin{proposition}
Suppose $(X\rightarrow d)\in \Sigma^{nb}\cup \Sigma^L$ and $(\begin{pmatrix} a_y\\y\end{pmatrix} Y\rightarrow d) \in \Sigma^L$. If $X\not \ll Y\cup y$ and $X\ll_A Y\cup a_y$, then $a_{y_{\downarrow_A}} \setminus y_\downarrow$ has an element from $X$.
\end{proposition}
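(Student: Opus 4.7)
The plan is to extract a witness from the hypothesis $X \not\ll Y\cup y$ and then trace through what the second hypothesis $X \ll_A Y\cup a_y$ forces about that witness. Specifically, by definition of $\ll$, the failure $X \not\ll Y\cup y$ produces some $x \in X$ with $x \notin (Y\cup y)_\downarrow$. I would then work with this particular $x$ throughout.

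Next I would exploit $X \ll_A Y\cup a_y$, which gives $x \in (Y\cup a_y)_{\downarrow_A}$, so $x \leq_A z$ for some $z \in Y \cup a_y$. The argument then splits on which side of the union $z$ lies on. If $z \in Y$, I would use the general fact (noted earlier in the paper) that $\geq_A$ is a sub-order of $\geq$, i.e.\ $z \geq_A x$ implies $z \geq x$, to conclude $x \in Y_\downarrow \subseteq (Y\cup y)_\downarrow$, contradicting the choice of $x$. This forces the other case, $z = a_y$, so $x \in a_{y_{\downarrow_A}}$.

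It then remains to observe that $x \notin y_\downarrow$, which is immediate from $x \notin (Y\cup y)_\downarrow \supseteq y_\downarrow$. Combining, $x \in (a_{y_{\downarrow_A}} \setminus y_\downarrow) \cap X$, which is the desired conclusion.

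There is no real obstacle here; the only subtlety is keeping track of the distinction between $\ll$ and $\ll_A$ and remembering that the broken binary implication $a_y \to y$ present in $\Sigma^b$ is absent in $\Sigma^b_t(A)$, so although $a_y \in y_\downarrow^{-1}$ (in the sense that $a_y \geq y$), one does \emph{not} have $a_{y_{\downarrow_A}} \supseteq y_\downarrow$. This asymmetry is exactly what makes the set difference $a_{y_{\downarrow_A}} \setminus y_\downarrow$ potentially nonempty and what the hypothesis $X \not\ll Y\cup y$ is designed to detect.
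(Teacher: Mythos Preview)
Your argument is correct and is the natural unwinding of the definitions: pick $x\in X$ witnessing $X\not\ll Y\cup y$, use $X\ll_A Y\cup a_y$ together with $\geq_A\subseteq\geq$ to rule out the case that $x$ is $\leq_A$ some element of $Y$, conclude $x\in a_{y_{\downarrow_A}}$, and note $x\notin y_\downarrow$. The paper does not supply a proof of this proposition at all; it states the proposition and immediately passes to an illustrative example, treating the claim as evident. Your write-up therefore fills a gap the paper leaves open, and it does so along exactly the lines one would expect.
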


\begin{example}
\end{example}
Let us modify Example \ref{AliftEx} by adding element $a_y \in X$, removing implication $a'\rightarrow y$ and adding implications $a_y\rightarrow y$, $a_y\rightarrow a'$ and $xa'\rightarrow d$. Let $A=\{a,a',a_y\}$.

Then $(x\begin{pmatrix} a_y\\y\end{pmatrix}\rightarrow d)\in \Sigma^L$ and it can be refined by $xa'\rightarrow d$. We have $a_{y_{\downarrow_A}} \setminus y_\downarrow = \{a_y, a'\}$, which is in the body of implication $xa'\rightarrow d$.\\

This allows to select implications $X\rightarrow d$ which might be $\ll_A$-refinements of implication $(\begin{pmatrix} a_y\\y\end{pmatrix} Y\rightarrow d) \in \Sigma^L$: compute set $a_{y_{\downarrow_A}} \setminus y_\downarrow$ and check its intersection with $X$. By the definition, $a_{y_{\downarrow_A}} \setminus y_\downarrow\subseteq A$, only implications $X \to d$ such that $X \cap A \neq \emptyset$ should be checked as possible refinements of implications in $\Sigma^L$.

At this stage we remove implications from $\Sigma^L$, if they can be $\ll_A$-refined within $\Sigma_t^b(A) \cup \Sigma^{nb}\cup \Sigma^L$.
We continue using notation $\Sigma^L$ for $A$-lift implications that remain.\\

(IV) In this stage of the algorithm the body-building technique is applied to implications of $\Sigma\cup \Sigma^L$.

Implications from $\Sigma\cup\Sigma^L$ that hold on $A$ will be included in $\Sigma(A)$ without change. 

Now consider $A'\rightarrow d$ in $\Sigma \cup \Sigma^L$ that fails on $A$.

For $A'\subseteq X$, we will use the notation 
$A'_{\uparrow_A} = \{x \in X: x\geq_A a \text{ for some } a \in A'\}$. 

For any body-building by elements in $X\setminus (A\cup d_{\uparrow_A})$ we need to choose $\geq_A$-minimal elements in poset $\langle X\setminus (A\cup d_{\uparrow_A}), \geq_A\rangle$. Indeed, if $x_1\geq_A x_2$ for $x_1,x_2 \in  X\setminus (A\cup d_{\uparrow_A})$, then $A'\cup x_2 \ll_A A'\cup x_1$.

When it happens that minimal element $x_m$ in $\langle X\setminus (A\cup d_{\uparrow_A}), \geq_A\rangle$ also satisfies $x_m\geq_A a$ for some $a \in A'$, then apparently implication $A'\cup x_m\rightarrow d$ can be $\ll_A$-refined to $(A'\setminus a)\cup x_m \rightarrow d$. Therefore, extension by any element $x_m\in A'_{\uparrow_A}$ should be modified to \emph{replacement} of $a$ by $x_m$.

Note that this cannot happen for $A$-Lift $\begin{pmatrix} a\\x\end{pmatrix}A''\rightarrow d$. In this case $x_m\geq_A x$, thus, cannot be minimal in $\langle X\setminus (A\cup d_{\uparrow_A}), \geq_A\rangle$.\\

(V) Body-building may generate implications that can be removed due to $\ll_A$-refinements.
Also note that some implications added by the body-building process may $\ll_A$-refine implications of $\Sigma^L$.

\begin{proposition}\label{refine}
Suppose $x_m$ is minimal in $\langle X\setminus (A\cup d_{\uparrow_A}), \geq_A\rangle$. If $(X'\rightarrow d)\in \Sigma \cup \Sigma^L$ is a $\ll_A$-refinement for $A'\cup x_m\rightarrow d$, then $x_m \in X'$ or $x_m\geq_A a$ for some $a \in X'\cap A$.
\end{proposition}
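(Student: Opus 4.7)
I would start by unpacking the hypothesis: $X' \subseteq (A' \cup \{x_m\})_{\downarrow_A}$, so for each $z \in X'$ there is a witness $y_z \in A' \cup \{x_m\}$ with $y_z \geq_A z$. The pivotal observation is that the relation $\geq_A$ is induced by $\Sigma_t^b(A)$, every binary implication of which is \emph{true} on $A$; consequently, whenever $y \in A$ and $y \geq_A z$, the element $z$ must also lie in $A$ (otherwise $y \to z$ would fail on $A$).

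Next I would split on whether $X'$ contains an element outside $A$. If there is $z \in X' \setminus A$, then $y_z \in A' \subseteq A$ is impossible by the observation above, so $y_z = x_m$ and $x_m \geq_A z$. A three-way case analysis on $z$ closes this branch: (i) if $z = x_m$ we immediately get $x_m \in X'$, the first disjunct; (ii) if $z \in d_{\uparrow_A}$, transitivity yields $x_m \geq_A d$, contradicting $x_m \in X \setminus (A \cup d_{\uparrow_A})$; (iii) if $z \in X \setminus (A \cup d_{\uparrow_A})$ with $z \neq x_m$, then $x_m \geq_A z$ contradicts the $\geq_A$-minimality of $x_m$ in that poset.

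When instead $X' \subseteq A$, every element of $X'$ lies in $X' \cap A$, so if some $z \in X'$ has witness $y_z = x_m$, then $x_m \geq_A z$ delivers the second disjunct at once. The only remaining possibility is that every witness lies in $A'$, i.e., $X' \subseteq A'_{\downarrow_A}$, so that $X' \ll_A A'$. This is the subtle case and the main obstacle. Here I would appeal to the $D$-basis structure together with the filtering performed in stage (III): since $\downarrow_A \subseteq \downarrow$, the relation $X' \ll_A A'$ implies $X' \ll A'$, and Theorem \ref{BD}(2) forces $X' = A'$ whenever both implications lie in $\Sigma$; if $A' \to d \in \Sigma^L$, the stage (III) filter would have already removed it in the presence of such a $\ll_A$-refinement from $\Sigma^{nb} \cup \Sigma^L$; and the symmetric situation $X' \to d \in \Sigma^L, A' \to d \in \Sigma$ is ruled out by tracing $X'$ back to the lifted source $C \to d$ and noting that $C \ll X' \ll_A A'$ would refine $A' \to d$ in the $D$-basis $\Sigma$. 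The residual case $X' = A'$ is vacuous for our purposes: $X' \to d$ is then the failing implication being body-built, not a useful refinement kept in the updated basis.

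I expect the clean half of the argument — the case $X' \not\subseteq A$, where one obtains $x_m \in X'$ directly from minimality — to be essentially mechanical; the genuine work is bookkeeping in the $X' \subseteq A$ case, where one must combine the $D$-basis uniqueness property from Theorem \ref{BD}(2) with the stage (III) filtering to preclude a nontrivial $\ll_A$-refinement $X' \ll_A A'$ surviving in $\Sigma \cup \Sigma^L$.
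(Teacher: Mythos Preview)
Your argument is correct, and the core case analysis matches the paper's, but you do considerably more work than the paper does. The paper's proof is four sentences: it simply opens with ``by assumption, $X'\not\ll_A A'$'' --- treating this as part of the algorithmic context (after stage~(III) no surviving $A'\to d$ admits a $\ll_A$-refinement in $\Sigma\cup\Sigma^L$) --- and then picks a single witness $x\in X'$ with $x_m\geq_A x$, observes that $X'$ contains no element of $d_{\uparrow_A}$, and concludes by minimality of $x_m$ that either $x=x_m$ or $x\in X'\cap A$.

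Your decomposition by whether $X'\setminus A$ is nonempty arrives at the same conclusion, and your transitivity argument for excluding $z\in d_{\uparrow_A}$ is actually cleaner than the paper's unjustified claim that $X'$ has no such elements. The genuine difference is that you spend the second half of the proof attempting to \emph{derive} $X'\not\ll_A A'$ from Theorem~\ref{BD}(2) and the stage~(III) filter, whereas the paper simply asserts it. Your derivation is mostly sound, but note that in the residual case $X'=A'$ you do not prove the stated disjunction --- you (correctly) observe it is irrelevant to the algorithm, which is exactly why the paper sidesteps the whole issue by taking $X'\not\ll_A A'$ as given. So: same essential proof, but the paper treats the exclusion of $X'\ll_A A'$ as a hypothesis supplied by the surrounding algorithm rather than something to be argued inside the proposition.
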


Indeed, by assumption, $X'\not\ll_A A'$. Then $X'\ll_A A'\cup x_m$ means that $x_m\geq_A x$ for some $x \in X'$. The only element $x\in X\setminus (A\cup d_{\uparrow_A})$ with this property is $x_m$. Since $X'$ cannot have elements in $d_{\uparrow_A}$, we have $x_m\in X$ or $x_m\geq_A a$ for some $a \in X'\cap A$.

Thus, body-building process identifies implications that could be potentially refined.\\





Note that no body-building replacement/extension is a refinement of the other.
Indeed, consider $A'\cup x'\rightarrow d$ and $A''\cup x''\rightarrow d$ extended or replace with minimal elements $x',x'' \in X\setminus (A\cup d_{\uparrow_A})$, and suppose $(A'\cup a')\ll_A (A''\cup a'')$. If $x'=x''$, then $A'\ll_A A''$, but we assumed that all refinements were applied at stage (III). Since $x''\not \geq_A x'$, then $a''\geq_A x'$ for some $a''\in A''$, a contradiction.\\

After applying stages (I)-(V) one obtains implicational set $\Sigma^*(A)$.

\begin{example}
\end{example}
Return again to Example \ref{AliftEx}. Recall that $A$-lift of implication $xy\rightarrow d$ comprises three implications: $ay\rightarrow d$, $xa'\rightarrow d$ and $aa'\rightarrow d$. While the first two belong to $\Sigma^*(A)$, the last one fails on $A$, therefore, it needs body-building update on stage (IV).

These are  $aa'y\rightarrow d$ and $aa'x\rightarrow d$, but both can be refined on stage (V) by 
$ay\rightarrow d$, $xa'\rightarrow d \in \Sigma^L$, respectively. Here, according to Proposition \ref{refine}, minimal elements $x,y$ used for extension are in the bodies of other implications in $\Sigma^L$.

Additionally, the body-building will be used for implications $a\rightarrow x$, $a'\rightarrow y$, giving $ay\rightarrow x$, $ad\rightarrow x$, $a'x\rightarrow y$ and $a'd\rightarrow y$.
\vspace{0.3cm}

\begin{theorem}
$\Sigma^*(A)$ is the $D$-basis of modified closure system $\mathcal{F}(A)$. 
\end{theorem}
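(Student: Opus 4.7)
The plan is to apply Theorem \ref{BD}: it suffices to show that $\Sigma^*(A)$ is a b-direct basis of $\mathcal{F}(A)$ which additionally satisfies property (2) of that theorem, namely that for any two distinct implications $Z\to d$ and $Y\to d$ in $\Sigma^*(A)$ one has $Z\not\subseteq Y_{\downarrow_A}$. Since the $D$-basis is the unique smallest b-direct basis, these two facts will together force $\Sigma^*(A)=\Sigma_D(\mathcal{F}(A))$.

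The first step is to verify validity: every implication in $\Sigma^*(A)$ holds on every set in $\mathcal{F}(A)$. Implications kept from $\Sigma_t(A)\cup\Sigma_t^b(A)$ are unchanged. Each $A$-lift $\begin{pmatrix} a\\x\end{pmatrix}C''\to d$ is the logical consequence of the original $\{x\}\cup C''\to d\in\Sigma$ together with $a\geq x$, both valid on $\mathcal{F}$ and hence on $\mathcal{F}(A)$. The body-building outputs are justified exactly as in the directness argument of Section \ref{CDupdate}: an extension by $x_m\in X\setminus(A\cup d_{\uparrow_A})$ can only fire on sets not contained in $A$, and any such set on which it fires is already closed under the required consequence in $\varphi$.

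The second step is b-directness. Given $Y\subseteq X$, applying $\Sigma_t^b(A)$ forms $Y_{\downarrow_A}$; I then need to show that every $d\in\varphi^*(Y)\setminus Y_{\downarrow_A}$ is the consequent of some non-binary implication of $\Sigma^*(A)$ whose body lies in $Y_{\downarrow_A}$. For $Y\not\subseteq A$, whenever a $D$-basis implication $C\to d\in\Sigma$ would have fired in $\mathcal{F}$ on $Y_\downarrow$ but its body is no longer contained in $Y_{\downarrow_A}$, this must be because some element of $C$ is a target $x\in T(A)$; the surviving $A$-lift using an element of $A_x$ (or the refining implication whose presence caused its deletion in stage (III)) supplies a body inside $Y_{\downarrow_A}$. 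For $Y\subseteq A$ the body-building outputs cover the remaining cases exactly as in Section \ref{CDupdate}.

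The main obstacle is the non-refinement property (2), which requires a careful case analysis based on the provenance of each implication. Pairs of implications both inherited from $\Sigma$ satisfy $Z\not\subseteq Y_\downarrow$ because $\Sigma$ is already the $D$-basis of $\mathcal{F}$; since $Y_{\downarrow_A}\subseteq Y_\downarrow$, non-refinement in $\ll$ implies non-refinement in $\ll_A$. Pairs in which at least one partner is an $A$-lift are precisely the candidates flagged by Proposition \ref{refine} and eliminated in stage (III). Pairs involving a body-building output are controlled by the explicit deletions of stage (V), together with the closing observation of that stage that no two body-building replacements $\ll_A$-refine each other. Once this bookkeeping is completed, both hypotheses of Theorem \ref{BD}(3) are verified and the theorem follows.
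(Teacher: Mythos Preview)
Your overall strategy---verify b-directness of $\Sigma^*(A)$ for $\varphi_A$ and then invoke Theorem~\ref{BD} after checking the non-refinement condition~(2)---is exactly the route the paper takes. However, the case analysis in your b-directness step has the roles of the two mechanisms reversed, and this is a genuine gap rather than a cosmetic omission.

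For $Y\subseteq A$ you claim that ``the body-building outputs cover the remaining cases exactly as in Section~\ref{CDupdate}.'' But every body-building output has the form $C'\cup\{x_m\}\to d$ with $x_m\in X\setminus A$, so its premise is never contained in $Y_{\downarrow_A}\subseteq A$. Thus body-building contributes nothing in this case. What actually does the work when $Y\subseteq A$ is the $A$-lift: for $d\in\varphi(Y)\cap A$ witnessed by some $C\to d\in\Sigma$ with $C\subseteq Y_\downarrow$, each element $c\in C\setminus A$ lies below some $y\in Y\subseteq A$, and replacing $c$ by a $\geq$-minimal $c_m\in A$ with $y\geq c_m\geq c$ yields an $A$-lift $C'\to d$ with $C'\subseteq Y_{\downarrow_A}$. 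Since $d\in A$, this lift holds on $A$ and survives to $\Sigma^*(A)$.

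Conversely, for $Y\not\subseteq A$ your sketch invokes only the $A$-lift, but body-building is indispensable there. Two sub-cases are missing: (i) the binary case $z\in Y_\downarrow\setminus Y_{\downarrow_A}$, where the failed implication $y\to z$ ($y\in Y\cap A$, $z\notin A$) must be recovered via $\{y,y_m\}\to z$ with $y_m$ a $\geq_A$-minimal element of $X\setminus A$ below some $y_0\in Y\setminus A$; and (ii) the situation where the $A$-lifted body $C'$ lands entirely inside $A$ while $d\notin A$, so that $C'\to d$ still fails on $A$ and requires a further body-building extension $C'\cup\{y_m\}\to d$. The paper's proof handles exactly these sub-cases explicitly. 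Without them your argument does not establish the inclusion $\varphi(Y)\subseteq Y_{\downarrow_A}\cup\{d:(C\to d)\in\Sigma^*(A),\,C\subseteq Y_{\downarrow_A}\}$ for $Y\not\subseteq A$.
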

\begin{proof}
We assume that $\Sigma$ is the $D$-basis of closure system $\mathcal{F}$ associated with closure operator $\varphi$, thus,
\[
\varphi(Y) = Y_{\downarrow} \cup \{d: (C\rightarrow d)\in \Sigma, C\subseteq Y_{\downarrow}\}.
\]

Let $\varphi_A$ be the closure operator associated with $\mathcal{F}(A)$, i.e., $\varphi_A(Y) = \varphi(Y)$, when $Y\not\subseteq A$, and $\varphi_A(Y)=\varphi(Y)\cap A$ for $Y\subseteq A$.

We want to show that $\Sigma(A)$ defined through the algorithm is b-direct basis for $\varphi_A$, i.e.
\[
(*)\ \ \ \ \ \ \ \ \ \ \ \ \   \varphi_A(Y) = Y_{\downarrow_A} \cup \{d: (C\rightarrow d)\in \Sigma(A), C\subseteq Y_{\downarrow_A}\}.
\]

Indeed, this would show that $\Sigma(A)$ is the b-direct basis for $\varphi_A$; then stages (III) and (V) of the algorithm are intended for the refinement, so their goal is to make the basis $\Sigma(A)$ to satisfy property (2) of Theorem \ref{BD}. As the result, $\Sigma(A)$ would be the shortest b-direct basis, the $D$-basis by Theorem \ref{BD}.

First, we observe that any $\sigma \in \Sigma^L$ holds on $\mathcal{F}$. Similarly, any body-building implication of stage (IV) holds on $\mathcal{F}$. Moreover, $\geq_A\subseteq \geq$, therefore, $Y_{\downarrow_A} \cup \{d: (C\rightarrow d)\in \Sigma(A), C\subseteq Y_{\downarrow_A}\} \subseteq \varphi(Y)$.

If $Y\not \subseteq A$, in order to confirm $(*)$ we would need to show $\varphi(Y) \subseteq Y_{\downarrow_A} \cup \{d: (C\rightarrow d)\in \Sigma(A), C\subseteq Y_{\downarrow_A}\}$. So assume that $y_0 \in Y\setminus A$.

First, show that $Y_\downarrow \subseteq Y_{\downarrow_A} \cup \{d: (C\rightarrow d)\in \Sigma(A), C\subseteq Y_{\downarrow_A}\}$.

Assume that $z \in Y_\downarrow \setminus Y_{\downarrow_A}$. This means $y\geq z$ for some $y \in Y\cap A$ and $z \not \in A$.
On body-building stage (IV) we would add implication $yy_m \rightarrow z$, for some minimal element $y_m$ in $\langle X\setminus A, \geq_A\rangle$ such that $y_0\geq_A y_m$.
Then $yy_m \subseteq Y_{\downarrow_A}$ and $yy_m \rightarrow z$ or its refinement in $\Sigma(A)$. Therefore, $z \in Y_{\downarrow_A} \cup \{d: (C\rightarrow d)\in \Sigma(A), C\subseteq Y_{\downarrow_A}\}$.

Now consider $C\rightarrow d$ in $\Sigma$, where $C\subseteq Y_\downarrow$.
There are two cases to consider. 

1) $C\not \subseteq Y_{\downarrow_A}$, because for some element $c\in C\cap (X\setminus A)$ we have $y\geq c$ for $y\in Y\cap A$, thus, $y\not \geq_A c$.

Take element $c_m\in A$ which is $\geq$-minimal element in $A$ such that $y\geq c_m \geq c$.

Then replacing all such $c \in C$ by $c_m$ will be an $A$-lift implication $C'\rightarrow d$ added on stage (II). Note that $C'\subseteq Y_{\downarrow_A}$.
If $C'\not \subseteq A$, then $C'\rightarrow d$ or its refinement is in $\Sigma(A)$.
If $C'\subseteq A$, then $C'\cup y_m \rightarrow d$ is in $\Sigma(A)$, for some minimal element $y_m$ in $\langle X\setminus A, \geq_A\rangle$ such that $y_0\geq_A y_m$. In either case, $d \in \{d: (C\rightarrow d)\in \Sigma(A), C\subseteq Y_{\downarrow_A}\}$.

2) $C\subseteq Y_{\downarrow_A}$, but $C\subseteq A$, therefore, $C\rightarrow d$ is not in $\Sigma(A)$. In this case $C\cup y_m \rightarrow d$ or its $\geq_A$-refinement is in $\Sigma(A)$, for some minimal element $y_m$ in $\langle X\setminus A, \geq_A\rangle$ such that $y_0\geq_A y_m$. Apparently, $C\cup y_m \subseteq Y_{\downarrow_A}$.

Now consider $Y\subseteq A$. Then $Y_{\downarrow_A} \subseteq A$ and for any $C\rightarrow d$ in $\Sigma(A)$ we must have $d \in A$. Therefore, $Y_{\downarrow_A} \cup \{d: (C\rightarrow d)\in \Sigma(A), C\subseteq Y_{\downarrow_A}\} \subseteq \varphi(Y) \cap A$, and we need only to show the inverse inclusion.

Apparently, $Y_\downarrow \cap A \subseteq Y_{\downarrow_A}$. So we take $d \in A\cap \{d: (C\rightarrow d)\in \Sigma, C\subseteq Y_{\downarrow}\}$. For any $c \in C\cap (X\setminus A)$ we can find $c_m \geq_A c$, a minimal element in $\langle A, \geq_A\rangle$ such that $y\geq_A c_m \geq c$, for some $y \in A$. Replacing all such elements $c$ by $c_m$ makes an $A$-lift $C'\rightarrow d$ of implication $C\rightarrow d$, which is added on stage (II). Apparently, $C'\subseteq Y_{\downarrow_A}$, therefore, $d \in \{d: (C\rightarrow d)\in \Sigma(A), C\subseteq Y_{\downarrow_A}\}$ as desired. 
\end{proof}

\section{Removal of a closed set}\label{remove}

In this section we consider the case when closure system $\mathcal{F}_\Sigma$ on set $X$ defined by the set of implications $\Sigma$ is modified by the removal of one closed set $A \in \mathcal{F}_\Sigma$. The remaining family $\mathcal{F}_\Sigma\setminus \{A\}$ will be again a closure system if and only if $A$ is a meet-irreducible in $\mathcal{F}$: $A=B\cap C$, for some $B,C \in \mathcal{F}$, implies $B=A$ or $C=A$. 

In general, one would need to remove more than just single set $A$, but this can be achieved through the iteration process, where one removes a meet-irreducible set on each step of iteration process. Apparently, there are various paths leading to removal of set $A$.

In the case when the closure system is defined by a context with set of objects $\mathcal{O}$ and set of attributes $\mathcal{A}$, the reduced context with the same closure system on the set of objects $\mathcal{O}^*\subseteq \mathcal{O}$ will contain only rows corresponding the meet-irreducible elements of the closure system. Thus, removal of any row in the \emph{reduced} context corresponds to the removal of meet-irreducible element of closure system.

These considerations prompt to consider the case of meet-irreducible $A \in \mathcal{F}$, which will be our assumption.

We also assume that $\Sigma$ is the canonical direct basis of closure system $\mathcal{F}_\Sigma$. Our goal is to find the canonical direct basis $\Sigma^*$ for 
$\mathcal{F}^*=\mathcal{F}\setminus \{A\}$.

\begin{lemma}
Let $\phi,\phi^*$ be closure operators corresponding to closure systems $\mathcal{F}, \mathcal{F}\setminus \{A\}$, respectively, where $A$ is some meet-irreducible element of $\mathcal{F}$. Then $\phi^*(Y) = \phi(Y)$, for any $Y\not \subseteq A$. 
\end{lemma}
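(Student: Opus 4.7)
The plan is to use the standard description of a closure operator as the intersection of closed sets containing the argument. Specifically, for any closure system $\mathcal{G}$ on $X$ with associated operator $\psi$, one has $\psi(Y) = \bigcap \{F \in \mathcal{G} : Y \subseteq F\}$. I would apply this to both $\phi$ and $\phi^*$ and compare the two collections being intersected.

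Concretely, I would first write $\phi(Y) = \bigcap \{F \in \mathcal{F} : Y \subseteq F\}$ and $\phi^*(Y) = \bigcap \{F \in \mathcal{F}\setminus\{A\} : Y \subseteq F\}$. The two collections differ only in whether $A$ is included. Now I would invoke the hypothesis $Y \not\subseteq A$: this says precisely that $A$ does not belong to the collection $\{F \in \mathcal{F} : Y \subseteq F\}$. Hence removing $A$ from $\mathcal{F}$ does not alter that collection, so the two intersections are identical and $\phi(Y) = \phi^*(Y)$.

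Two small points to mention for completeness. First, that $\mathcal{F} \setminus \{A\}$ really is a closure system (so that $\phi^*$ is well-defined) follows from meet-irreducibility of $A$ in $\mathcal{F}$, which is part of the standing assumption and was already remarked in the preceding paragraph; I would cite this rather than reprove it. Second, the meet-irreducibility is not actually used in the equality itself — it is only used to guarantee that $\phi^*$ exists — so the substance of the lemma is really just the trivial observation that a set $A$ which does not contain $Y$ cannot be in the intersection defining $\phi(Y)$.

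There is essentially no obstacle here; the whole statement is a one-line consequence of the intersection characterization of closure, and I would expect the write-up to be only a few sentences. The only thing to be careful about is to phrase it so that the role of the hypothesis $Y \not\subseteq A$ is clearly visible and to distinguish it from the separate (and unused in this lemma) hypothesis of meet-irreducibility, which is what makes the removal of $A$ legitimate in the first place.
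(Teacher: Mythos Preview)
Your argument is correct and matches the paper's treatment: the paper states this lemma without proof, simply moving on with ``It follows that the closure changes only for subsets of $A$,'' which is exactly the one-line observation you give via the intersection characterization of closure. Your remark that meet-irreducibility is needed only to ensure $\mathcal{F}\setminus\{A\}$ is a closure system, not for the equality itself, is also accurate.
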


It follows that the closure changes only for subsets of $A$, thus, we should expect new implications $Z\rightarrow x$ with $Z\subseteq A$. To describe premises and consequents of these new implications, we introduce further notations.

Consider (unique) upper cover $A^*$ of $A$ in $\mathcal{F}$, and let $A^*\setminus A=\{ d_1,\dots, d_k\}$.

Recall that a pair $S=\langle B, \mathcal{B}\rangle$, where $B$ is a set and $\mathcal{B}\subseteq 2^B$ is a family of subsets of $B$ is called a \emph{hypergraph}. Subset $U\subseteq B$ is called \emph{a transversal} of hypergraph $S=\langle B, \mathcal{B}\rangle$, if $U\cap V \not = \emptyset$, for every $V\in \mathcal{B}$.

Define hypergraph $H=\langle A, \mathcal{T}\rangle$ as follows:
$C \subseteq A \in \mathcal{T}$ iff $C=A\setminus \phi(A')$, for some $A'\subseteq A$.
In other words, $\mathcal{T}$ is the collection of complements (in $A$) of $\phi$-closed subsets of $A$.

Let $Y_1,\dots, Y_t$ be minimal (with respect to $\subseteq$ relation) transversals of hypergraph $H=\langle A, \mathcal{T}\rangle$.

\begin{lemma}\label{L1}
If $Y\rightarrow d$ is any implication that holds on $\mathcal{F}^*=\mathcal{F}\setminus \{A\}$ and fails on $A$, then $Y_j\subseteq Y$, for some $j\leq t$, and $d=d_i$, for some $i\leq k$.
\end{lemma}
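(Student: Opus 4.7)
The plan is to unpack the statement "holds on $\mathcal{F}^*$ and fails on $A$" in terms of the two closure operators. Failing on $A$ means $Y \subseteq A$ and $d \notin A$. Holding on $\mathcal{F}^*$ means $d \in \phi^*(Y) = \bigcap\{F \in \mathcal{F}\setminus\{A\} : Y \subseteq F\}$. So the hypothesis reduces to: $Y \subseteq A$, $d \notin A$, and $d \in \phi^*(Y) \setminus A$.

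Next, I would reformulate the transversal condition in closure-theoretic terms. Because $A$ is $\phi$-closed and $A' \subseteq A$ forces $\phi(A') \subseteq A$, the family $\mathcal{T}$ is exactly the set of complements in $A$ of the $\phi$-closed subsets of $A$. Hence $U \subseteq A$ intersects every nonempty member of $\mathcal{T}$ iff $U$ is not contained in any proper $\phi$-closed subset of $A$, which is exactly saying $\phi(U) = A$. Thus a subset of $A$ is a transversal of $H$ precisely when its $\phi$-closure equals $A$, and the $Y_j$ are the $\subseteq$-minimal subsets of $A$ with $\phi(Y_j) = A$.

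The crux is then to show $\phi(Y) = A$. Since $Y \subseteq A$, certainly $\phi(Y) \subseteq A$. Suppose for contradiction that $\phi(Y) \subsetneq A$. Then $\phi(Y)$ is a $\phi$-closed set different from $A$, so $\phi(Y) \in \mathcal{F}^*$. Since $Y \subseteq \phi(Y)$ and the implication $Y \to d$ holds on every member of $\mathcal{F}^*$, we would conclude $d \in \phi(Y) \subseteq A$, contradicting $d \notin A$. Hence $\phi(Y) = A$, so $Y$ is a transversal of $H$ and therefore contains some minimal transversal $Y_j$, giving the first conclusion.

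Finally, for the claim on $d$, I would compute $\phi^*(Y)$ directly. From $\phi(Y) = A$, any $F \in \mathcal{F}$ containing $Y$ must contain $A$. So $\phi^*(Y)$ is the intersection of all $F \in \mathcal{F}$ with $F \supsetneq A$. The meet-irreducibility of $A$ is exactly what guarantees that this intersection is the unique upper cover $A^*$: if the intersection were a strict superset of $A^*$, then two distinct covers would be needed and their meet would equal $A$, violating meet-irreducibility. Thus $\phi^*(Y) = A^*$, and $d \in \phi^*(Y) \setminus A = A^* \setminus A = \{d_1, \dots, d_k\}$. The only delicate point in the whole argument is that transversal-versus-closure reformulation in the second paragraph, which relies on $A$ being closed so that $\phi$ restricted to subsets of $A$ lands inside $A$; once this is in place the rest is a direct contradiction argument followed by a use of meet-irreducibility.
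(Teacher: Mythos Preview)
Your proof is correct and follows essentially the same route as the paper's. Both arguments observe that failing on $A$ gives $Y\subseteq A$ and $d\notin A$, and both show that $Y$ meets $A\setminus Z$ for every proper $\phi$-closed $Z\subsetneq A$; you phrase this via the equivalent reformulation ``$Y$ is a transversal of $H$ iff $\phi(Y)=A$'' and a contradiction argument, while the paper checks the intersection condition directly on each such $Z$. For the conclusion $d\in A^*\setminus A$, the paper simply evaluates the implication on the single set $A^*\in\mathcal{F}^*$ (since $Y\subseteq A\subseteq A^*$, validity forces $d\in A^*$), whereas you go further and compute $\phi^*(Y)=A^*$ using meet-irreducibility; this is correct but more than is needed.
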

\begin{proof}
It is clear that for any implication $Y\rightarrow d$ failing on $A$ we have $Y\subseteq A$ and $d \not \in A$. Since it holds on $A^*$, we must have $d\in A^*$, i.e. $d=d_i$ for some $i\leq k$.

Apparently, $Y\rightarrow d$ holds on set $Z \in \mathcal{F}^*$, if $d\in Z$. So we consider $Z\in \mathcal{F}^*$ such that $d \not\in Z$. Then $Y\rightarrow d$ holds on $Z$, when $Y\not \subseteq Z$, which also implies $Y\not \subseteq Z\cap A$. Thus,  we only need to consider the case of closed sets $Z\subseteq A$. Then $Y\not \subseteq Z$ iff $Y\cap A\setminus Z \not = \emptyset$. Since it is true for any $Z\in \mathcal{F}^*$, $Z\subseteq A$, we conclude that $Y$ is a transversal of hypergraph $G=\langle A, \mathcal{T}\rangle$. It follows that $Y$ contains some minimal transversal $Y_j$, $j\leq t$. 
\end{proof}

\begin{lemma}
Let $\Sigma_1=\{Y_j\rightarrow d_i: j\leq t, i\leq k\}$. Then $\Sigma \cup \Sigma_1$ is a direct basis of $\mathcal{F}^*$.
\end{lemma}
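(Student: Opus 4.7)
I would organize the proof around two goals: (a) every implication in $\Sigma\cup\Sigma_1$ is valid on $\mathcal{F}^*$, and (b) the one-pass closure rule using $\Sigma\cup\Sigma_1$ already produces the $\phi^*$-closure, which is exactly the directness condition recalled in Section~\ref{CDupdate}.

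For (a), validity of $\Sigma$ on $\mathcal{F}^*\subseteq \mathcal{F}$ is automatic. For each $Y_j\to d_i\in \Sigma_1$ I would fix $Z\in\mathcal{F}^*$ with $Y_j\subseteq Z$ and split on whether $Z\subseteq A$. If $Z\subseteq A$, then $Z$ is a $\phi$-closed proper subset of $A$, so $A\setminus Z\in\mathcal{T}$ and the transversal $Y_j$ must meet $A\setminus Z$, contradicting $Y_j\subseteq Z$; the implication is vacuously satisfied. If $Z\not\subseteq A$, the intersection $Z\cap A$ is $\phi$-closed and contains $Y_j$; the same transversal argument forces $Z\cap A=A$, i.e.\ $A\subseteq Z$; meet-irreducibility of $A$ together with $Z\neq A$ then forces $Z\supseteq A^*$, so $d_i\in Z$.

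For (b), put $\pi(Y)=Y\cup\{d:(C\to d)\in\Sigma\cup\Sigma_1,\ C\subseteq Y\}$. Part (a) yields $\pi(Y)\subseteq\phi^*(Y)$, since $\phi^*(Y)\in\mathcal{F}^*$ contains $Y$ and satisfies every implication of $\Sigma\cup\Sigma_1$. For the reverse inclusion I would split on whether $Y\subseteq A$. When $Y\not\subseteq A$, the previous lemma gives $\phi^*(Y)=\phi(Y)$, and directness of the canonical direct basis $\Sigma$ for $\mathcal{F}$ produces $\phi(Y)=Y\cup\{d:(C\to d)\in\Sigma,\ C\subseteq Y\}\subseteq\pi(Y)$. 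When $Y\subseteq A$, then $\phi(Y)\subseteq\phi(A)=A$, so either $\phi(Y)\subsetneq A$, in which case $\phi(Y)\in\mathcal{F}^*$ and $\phi^*(Y)=\phi(Y)\subseteq\pi(Y)$ exactly as before, or $\phi(Y)=A$, in which case $\phi^*(Y)=A^{*}=A\cup\{d_1,\dots,d_k\}$.

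The only genuinely new content lies in this last sub-case: I must verify that each $d_i\in\pi(Y)$. The key observation is that $\phi(Y)=A$ forces $Y$ to be a transversal of $H$, because if $Y\subseteq B$ for some proper $\phi$-closed $B\subsetneq A$, then $\phi(Y)\subseteq B\subsetneq A$, contradicting $\phi(Y)=A$. Hence $Y$ meets every $A\setminus B\in\mathcal{T}$, and so some minimal transversal $Y_j$ is contained in $Y$; the implication $Y_j\to d_i\in\Sigma_1$ then certifies $d_i\in\pi(Y)$ for each $i$. I expect this transversal identification to be the main obstacle: it is the step where the hypergraph definition of $\Sigma_1$ is matched precisely to what directness of the updated basis demands, and it is also the point at which the meet-irreducibility of $A$ (via the unique upper cover $A^*$) is essential.
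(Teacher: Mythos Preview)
Your argument is correct, and it reaches the conclusion by a route genuinely different from the paper's. The paper first argues that $\Sigma\cup\Sigma_1$ is a basis by showing that every implication valid on $\mathcal{F}^*$ already has a stronger instance inside $\Sigma\cup\Sigma_1$ (splitting on whether the implication holds or fails on $A$ and invoking Lemma~\ref{L1} in the latter case), and then establishes directness by appealing to the resolution-closure characterization of direct bases (Theorem~14 of \cite{BM10}): it checks that whenever $Y\to d$ and $D\cup d\to x$ lie in $\Sigma\cup\Sigma_1$, some $Z\to x$ with $Z\subseteq Y\cup D$ lies there as well. You instead verify soundness of $\Sigma_1$ on $\mathcal{F}^*$ explicitly (which the paper leaves implicit) and then prove $\pi(Y)=\phi^*(Y)$ directly from the definition, by a case analysis on whether $Y\subseteq A$ and, within that, whether $\phi(Y)=A$. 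Your approach is self-contained---it needs no external characterization of directness---and the transversal argument you isolate in the final sub-case is precisely the content of Lemma~\ref{L1} unfolded in the forward direction. The paper's approach, by contrast, is more modular: it factors the work through Lemma~\ref{L1} and the Bertet--Monjardet criterion, which makes the resolution structure of the basis visible but at the cost of an external dependency.
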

\begin{proof}
If $Z\rightarrow t$ is any implication that holds on $\mathcal{F}^*$, then it is either holds on $A$ or fails on $A$. If it holds on $A$, then it holds on $\mathcal{F}$, thus, there exists $Z'\rightarrow x \in \Sigma$ such that $Z'\subseteq Z$, since $\Sigma$ is direct basis for $\mathcal{F}$. Thus, $Z\rightarrow t$ follows from $\Sigma$.

If $Z\rightarrow t$ fails on $A$, then according to Lemma \ref{L1}, $t=d_i$, for some $i\leq k$, and $Z\subseteq Y_j$, for some $j\leq y$. Therefore, $Z\rightarrow t$ follows from $\Sigma_1$.

This shows that $\Sigma \cup \Sigma_1$ is a basis for $\mathcal{F}^*$.

In order to show that this is a direct basis, we may use Theorem 14 from \cite{BM10} to prove that basis $\Sigma \cup \Sigma_1$ has the following property: if $Y\rightarrow d$ and $D\cup d\rightarrow x$ are in $\Sigma \cup \Sigma_1$, then there exists $Z \subseteq Y\cup D$ such that $Z\rightarrow x$ is again in $\Sigma \cup \Sigma_1$.


Assume that one of $Y\rightarrow x$, $D\cup x\rightarrow d$ is in $\Sigma$ and the other is in $\Sigma_1$. Since both implications hold on $\mathcal{F}^*$, the resolution $Y\cup D \rightarrow d$ also holds on $\mathcal{F}^*$. Then depending on whether it holds on $A$ or not, there should be $Z\rightarrow x$ in $\Sigma$ or $\Sigma_1$ with $Z\subseteq D\cup Y$.

It is not possible that both implications are from $\Sigma_1$, and if they are both in $\Sigma$, the required conclusion follows from the directness of $\Sigma$.
\end{proof}

Note that $\Sigma\cup\Sigma_1$ is not necessarily the \emph{canonical} direct basis of $\mathcal{F}^*$.

\begin{example}

\end{example}
Consider closure system on $X=\{m_1,m_2,x,d\}$ defined by set of implications $\Sigma = \{d\rightarrow x, m_1m_2x\rightarrow d\}$. Set $A=\{m_1,m_2\}$ is closed and meet-irreducible, its upper cover is $X$. If we want to remove it, then the hypergraph is $H=\langle A, \{\{m_1\}, \{m_2\}\}\rangle$, thus, it has a unique minimal transversal $\{m_1,m_2\}$. Then we need to add implications $m_1m_2\rightarrow x$ and $m_1m_2\rightarrow d$. It follows that original implication $m_1m_2x\rightarrow d$ can be removed.

\section{Algorithmic solutions and testing}

We will present the results of code implementations of two algorithms discussed in section \ref{CDupdate} and \ref{Dupdate} and their testing on some bench-mark examples developed in earlier code implementations for the $D$-basis outputs. In \cite{R16}, the algorithm produces the $D$-basis, when the input is some set of implications defining the closure operator. In \cite{AN17}, the closure operator is encoded in the context, and the extraction of the $D$-basis is done by reduction to known solutions of the hypergraph dualization problem.

We will present the bounds of algorithmic complexity and compare them with the actual time distributions based on parameters  such as the sizes of the input and output.

\section{Canonical Direct Update Implementation}

When we consider the complexity of the update for the Canonical Direct basis, we compare the modified algorithm to a naive implementation of the original body building formula described in \cite{ASST}. When we apply the body building formula, the first step is to remove and replace the implications which are invalid, given the new closed set we wish to add. Going through the basis once, each implication is either kept or removed. If an implication is removed, we add implications where the premise is extended by elements of the base set. If the basis contains $n$ implications and the base set has size $x$, then the application of this formula has complexity $O(n \cdot x)$.

Once this pass is completed, the resulting basis is a direct basis for the desired closure system. However, there may be extra implications which must be eliminated. Let the number of implications in the updated basis be $m \leq n \cdot x$. To remove extra implications, each implication is compared to each other implication in the basis, and if there is a stronger implication with the same consequent, the weaker implication is removed. Since this compares each implication to each other, this step of the process has complexity $O(m^2)$. Once the basis is reduced to be minimal with respect to this condition, we have produced the Canonical Direct basis for the updated closure system.

With the application of the modified body building formula, we follow the same steps for the first part, but the use of $d$-\emph{sectors} allows us to restrict the addition of extra implications. So, the produced basis is the Canonical Direct basis, produced with $O(n \cdot x)$ complexity and has $m$ implications. The added complexity to this comes in building the enriched data structure for the new basis. We keep track of singleton skewed differences between implication premises among sectors. So in each sector, each implication is compared to each other and if the set difference between the premises is a singleton, we store in in the data structure. This must be done for each affected sector. We must update d-sectors where d is not an element of the new closed set to add. In the worst case, we have all $m$ implications in a single sector and they must all be prepared, so our complexity for this step is also $O(m^2)$.

While the two algorithms have the same worst case complexity, the use of $d$-sectors provides practical time benefits. The worst case is that all implications are in a single sector, it is common for implications to be split among several or many sectors.

To test the practical difference between the two algorithms, the algorithms were implemented in Scala and run on random examples. For each of the 1000 examples, a random binary table between the size of $10 \times 10$ and $15 \times 15$ is generated. Then, the Canonical Direct basis of the table is generated and a random subset of the base set is chosen as the new closed set for the update. Then the update is performed with each of the algorithms and compared. 

We also compared the algorithm to an implementation of Algorithm 3 in M. Wild \cite{W94, W95}, which also mentions earlier implementation in H.~Mannila and K.J.~R\"{a}ih\"{a} \cite{MR92} for computation of a direct basis. While the goal of this algorithm is to generate an entire basis from a given Moore family, the algorithm does so iteratively with a subroutine which performs the same action as our algorithm. The algorithm is similar to the original body building formula, the main difference being that the algorithm in \cite{W94} is implemented for the Canonical Direct basis of full implications as opposed to the Canonical Direct unit basis.

\begin{table}[h!]
\caption{Average Update Times by Number of Broken Implications}
\centering
\begin{tabular}{|c|c|c|c|}
\hline
Broken & Naive Update (ms)  & Wild Update (ms) & Modified Update (ms) \\ [0.5ex]
\hline
10 & 91.1 & 23.9 & 18.9 \\
20 & 190.8 & 45.7 & 30.4 \\
40 & 194.6 & 51.8 & 28.3 \\
\hline
Overall & 198.5 & 54.2 & 27.7 \\
\hline
\end{tabular}
\end{table}

Comparatively Wild's algorithm performs much better than the original body building formula, likely due to the smaller number of implications in his condensed basis. Overall, our modified body building formula outperformed Wild's algorithm, especially in cases where the number of broken was large. The modified body building formula completed the update faster than the naive update in all 1000 examples. However, Wild's update algorithm, while slower on average, outperformed the modified formula in 133 cases. Of these cases, 110 occurred when at most 20 implications were broken. The number of broken implications in these examples had an average of 30 and ranged from 0 to 140. 

\begin{center}
\includegraphics[width=4in]{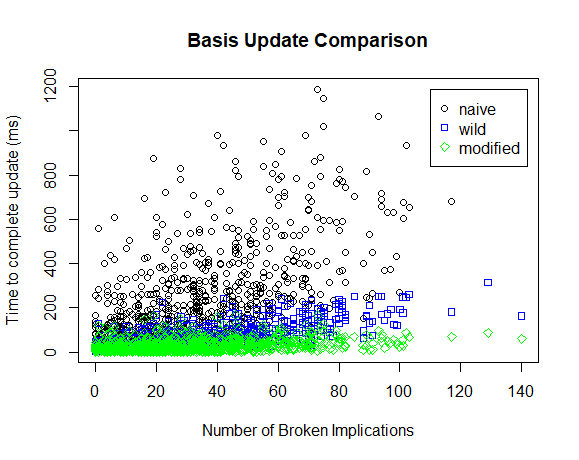}
\end{center}

As we see in the plot of the example data, the data generally follows 3 bands, one for each algorithm with the modified algorithm generally having the lowest update time. 

\vspace{0.2 in}

{\it Acknowledgements.} The first results of the paper were presented on the poster session of ICFCA-2017 in Rennes, France, and both authors' participation in the conference was supported by the research fund of Hofstra University. We thank Sergey Obiedkov for pointing to the important publication of Marcel Wild \cite{W94}, and we thank Justin Cabot-Miller for his support in producing valuable test cases in the implementation phase.

\end{document}